\documentclass[submission]{eptcs}
\pdfoutput=1
%This is a template for producing LIPIcs articles. 
%See lipics-v2021-authors-guidelines.pdf for further information.
%for A4 paper format use option "a4paper", for US-letter use option "letterpaper"
%for british hyphenation rules use option "UKenglish", for american hyphenation rules use option "USenglish"
%for section-numbered lemmas etc., use "numberwithinsect"
%for enabling cleveref support, use "cleveref"
%for enabling autoref support, use "autoref"
%for anonymousing the authors (e.g. for double-blind review), add "anonymous"
%for enabling thm-restate support, use "thm-restate"
%for enabling a two-column layout for the author/affilation part (only applicable for > 6 authors), use "authorcolumns"
%for producing a PDF according the PDF/A standard, add "pdfa"

%\pdfoutput=1 %uncomment to ensure pdflatex processing (mandatatory e.g. to submit to arXiv)
%\hideLIPIcs  %uncomment to remove references to LIPIcs series (logo, DOI, ...), e.g. when preparing a pre-final version to be uploaded to arXiv or another public repository

%\graphicspath{{./graphics/}}%helpful if your graphic files are in another directory

%%%%%%%%%%%OURS%%%%%%%%%%%%%%%%%

\usepackage[utf8]{inputenc}

\usepackage{tikz}
\usepackage{tikzit}
\usepackage{mathdots}
\usepackage{yhmath}
\usepackage{cancel}
\usepackage{color}

\usepackage{array}
\usepackage{multirow}
\usepackage{tabularx}
\usepackage{booktabs}
\usepackage{makecell}

\usepackage{hyperref}
\usepackage{todonotes}

\usepackage{extarrows}
\usepackage{braket}
\usepackage{enumitem}

\usepackage{amsmath,amsthm}
\usepackage{stmaryrd}
\usepackage{siunitx}
\usepackage{amssymb}
\usepackage{gensymb}
\usepackage{mathtools}
\usepackage{skak}

\setlist{topsep=0pt,itemsep=0.2pt,partopsep=2pt, parsep=2pt}
\usetikzlibrary{fadings}
\usetikzlibrary{patterns}
\usetikzlibrary{shadows.blur}
\usetikzlibrary{shapes}

\DeclareRobustCommand{\harp}[1]{\mathpalette\harpoonvec{#1}}
\newcommand{\harpvecsign}{\scriptscriptstyle\rightharpoonup}
\newcommand{\harpoonvec}[2]{%
  \ifx\displaystyle#1\doalign{$\harpvecsign$}{#1#2}\fi
  \ifx\textstyle#1\doalign{$\harpvecsign$}{#1#2}\fi
  \ifx\scriptstyle#1\doalign{\scalebox{.6}[.9]{$\harpvecsign$}}{#1#2}\fi
  \ifx\scriptscriptstyle#1\doalign{\scalebox{.5}[.8]{$\harpvecsign$}}{#1#2}\fi
}
\newcommand{\doalign}[2]{%
 {\vbox{\offinterlineskip\ialign{\hfil##\hfil\cr#1\cr$#2$\cr}}}%
}

%\newcommand{ \tensorchi       }{\tens{\chi}}
%\renewcommand{  \tensorzeta        }{\tens{\zeta}}
%\renewcommand{ \tensormu      }{\tens{\mu}}
%\renewcommand{  \tensorxi  }{\tens{\xi}}

%%%%%%%%%%%%
% UHfB
%%%%%%%%%%%%

%%%%%%%%%%%%%%%
% Edge Styles
%%%%%%%%%%%%%%%

\tikzstyle{doubled}=[line width=1.5pt] % set the line width for all doubled (quantum) maps/wires

%%%%%%%%%%%%%%%
% Spiders
%%%%%%%%%%%%%%%

\tikzstyle{dot}=[inner sep=0mm,minimum width=2mm,minimum height=2mm,draw,shape=circle]  
\tikzstyle{ddot}=[inner sep=0mm, doubled, minimum width=2.5mm,minimum height=2.5mm,draw,shape=circle]

\tikzstyle{pdot}=[inner sep=0mm, doubled, minimum width=2.5mm,minimum height=2.5mm,shape=circle]
\tikzstyle{phase dimensions}=[minimum size=6mm,font=\footnotesize,inner sep=0.2mm,outer sep=-2mm]

\tikzstyle{phase dot}=[pdot,phase dimensions]
\tikzstyle{wphase dot}=[dot, phase dimensions]

%%%%%%%%%%%%%
% Hadamards
%%%%%%%%%%%%%

\tikzstyle{hadamard}=[fill=white,draw,inner sep=0.6mm,font=\footnotesize,minimum height=6mm,minimum width=8mm]

% \tikzstyle{anti} = [shade, bottom color=black, top color = white, draw, minimum height = 4mm, minimum width = 4mm]

\tikzstyle{anti} = [fill=white,draw,inner sep=0.6mm,font=\footnotesize,minimum height=3mm,minimum width=3mm]

%%%%%%%%%%%%%%%%%%%%%
% States and Effects
%%%%%%%%%%%%%%%%%%%%%

\tikzstyle{triang}=[regular polygon,regular polygon sides=3,draw,scale=0.75,inner sep=-0.75pt,minimum width=9mm,fill=white,regular polygon rotate=180]
\tikzstyle{triang_lesssep}=[regular polygon,regular polygon sides=3,draw,scale=0.75,inner sep=-4pt,minimum width=9mm,fill=white,regular polygon rotate=180, text depth=4mm]
\tikzstyle{triangdag}=[regular polygon,regular polygon sides=3,draw,scale=0.75,inner sep=-0.5pt,minimum width=9mm,fill=white]

%%%%%%%%%%%%%%%%%%%%
% Morphisms
%%%%%%%%%%%%%%%%%%%%

\makeatletter
\newcommand{\boxshape}[3]{%
\pgfdeclareshape{#1}{
\inheritsavedanchors[from=rectangle] % this is nearly a rectangle
\inheritanchorborder[from=rectangle]
\inheritanchor[from=rectangle]{center}
\inheritanchor[from=rectangle]{north}
\inheritanchor[from=rectangle]{south}
\inheritanchor[from=rectangle]{west}
\inheritanchor[from=rectangle]{east}
% ... and possibly more
\backgroundpath{% this is new
% store lower right in xa/ya and upper right in xb/yb
\southwest \pgf@xa=\pgf@x \pgf@ya=\pgf@y
\northeast \pgf@xb=\pgf@x \pgf@yb=\pgf@y

\@tempdima=#2
\@tempdimb=#3

\pgfpathmoveto{\pgfpoint{\pgf@xa - 5pt + \@tempdima}{\pgf@ya}}
\pgfpathlineto{\pgfpoint{\pgf@xa - 5pt - \@tempdima}{\pgf@yb}}
\pgfpathlineto{\pgfpoint{\pgf@xb + 5pt + \@tempdimb}{\pgf@yb}}
\pgfpathlineto{\pgfpoint{\pgf@xb + 5pt - \@tempdimb}{\pgf@ya}}
\pgfpathlineto{\pgfpoint{\pgf@xa - 5pt + \@tempdima}{\pgf@ya}}
\pgfpathclose
}
}}

\boxshape{NEbox}{0pt}{5pt}
\boxshape{SEbox}{0pt}{-5pt}
\boxshape{NWbox}{5pt}{0pt}
\boxshape{SWbox}{-5pt}{0pt}
\boxshape{EBox}{-3pt}{3pt}
\boxshape{WBox}{3pt}{-3pt}
\makeatother

\tikzstyle{map}=[draw,shape=NEbox,inner sep=2pt,minimum height=6mm,fill=white]
\tikzstyle{mapdag}=[draw,shape=SEbox,inner sep=2pt,minimum height=6mm,fill=white]
\tikzstyle{maptrans}=[draw,shape=SWbox,inner sep=2pt,minimum height=6mm,fill=white]
\tikzstyle{mapconj}=[draw,shape=NWbox,inner sep=2pt,minimum height=6mm,fill=white]

\tikzstyle{dmap}=[draw,doubled,shape=NEbox,inner sep=2pt,minimum height=6mm,fill=white]
\tikzstyle{dmapdag}=[draw,doubled,shape=SEbox,inner sep=2pt,minimum height=6mm,fill=white]
\tikzstyle{dmaptrans}=[draw,doubled,shape=SWbox,inner sep=2pt,minimum height=6mm,fill=white]
\tikzstyle{dmapconj}=[draw,doubled,shape=NWbox,inner sep=2pt,minimum height=6mm,fill=white]

\makeatletter
\pgfdeclareshape{cornerpoint}{
\inheritsavedanchors[from=rectangle] % this is nearly a rectangle
\inheritanchorborder[from=rectangle]
\inheritanchor[from=rectangle]{center}
\inheritanchor[from=rectangle]{north}
\inheritanchor[from=rectangle]{south}
\inheritanchor[from=rectangle]{west}
\inheritanchor[from=rectangle]{east}
% ... and possibly more
\backgroundpath{% this is new
% store lower right in xa/ya and upper right in xb/yb
\southwest \pgf@xa=\pgf@x \pgf@ya=\pgf@y
\northeast \pgf@xb=\pgf@x \pgf@yb=\pgf@y

\pgfmathsetmacro{\pgf@shorten@left}{\pgfkeysvalueof{/tikz/shorten left}}
\pgfmathsetmacro{\pgf@shorten@right}{\pgfkeysvalueof{/tikz/shorten right}}

\pgfpathmoveto{\pgfpoint{0.5 * (\pgf@xa + \pgf@xb)}{\pgf@ya - 5pt}}
\pgfpathlineto{\pgfpoint{\pgf@xa - 8pt + \pgf@shorten@left}{\pgf@yb - 1.5 * \pgf@shorten@left}}
\pgfpathlineto{\pgfpoint{\pgf@xa - 8pt + \pgf@shorten@left}{\pgf@yb}}
\pgfpathlineto{\pgfpoint{\pgf@xb + 8pt - \pgf@shorten@right}{\pgf@yb}}
\pgfpathlineto{\pgfpoint{\pgf@xb + 8pt - \pgf@shorten@right}{\pgf@yb - 1.5 * \pgf@shorten@right}}
\pgfpathclose
}
}

\pgfdeclareshape{cornercopoint}{
\inheritsavedanchors[from=rectangle] % this is nearly a rectangle
\inheritanchorborder[from=rectangle]
\inheritanchor[from=rectangle]{center}
\inheritanchor[from=rectangle]{north}
\inheritanchor[from=rectangle]{south}
\inheritanchor[from=rectangle]{west}
\inheritanchor[from=rectangle]{east}
% ... and possibly more
\backgroundpath{% this is new
% store lower right in xa/ya and upper right in xb/yb
\southwest \pgf@xa=\pgf@x \pgf@ya=\pgf@y
\northeast \pgf@xb=\pgf@x \pgf@yb=\pgf@y

\pgfmathsetmacro{\pgf@shorten@left}{\pgfkeysvalueof{/tikz/shorten left}}
\pgfmathsetmacro{\pgf@shorten@right}{\pgfkeysvalueof{/tikz/shorten right}}

\pgfpathmoveto{\pgfpoint{0.5 * (\pgf@xa + \pgf@xb)}{\pgf@yb + 5pt}}
\pgfpathlineto{\pgfpoint{\pgf@xa - 8pt + \pgf@shorten@left}{\pgf@ya + 1.5 * \pgf@shorten@left}}
\pgfpathlineto{\pgfpoint{\pgf@xa - 8pt + \pgf@shorten@left}{\pgf@ya}}
\pgfpathlineto{\pgfpoint{\pgf@xb + 8pt - \pgf@shorten@right}{\pgf@ya}}
\pgfpathlineto{\pgfpoint{\pgf@xb + 8pt - \pgf@shorten@right}{\pgf@ya + 1.5 * \pgf@shorten@right}}
\pgfpathclose
}
}

\makeatother

\pgfkeyssetvalue{/tikz/shorten left}{0pt}
\pgfkeyssetvalue{/tikz/shorten right}{0pt}

\tikzstyle{kpoint common}=[draw,fill=white,inner sep=1pt,minimum height=4mm]
\tikzstyle{kpoint}=[shape=cornerpoint,shorten left=5pt,kpoint common]
\tikzstyle{kpoint adjoint}=[shape=cornercopoint,shorten left=5pt,kpoint common]
\tikzstyle{kpoint conjugate}=[shape=cornerpoint,shorten right=5pt,kpoint common]
\tikzstyle{kpoint transpose}=[shape=cornercopoint,shorten right=5pt,kpoint common]

\tikzstyle{kpointdag}=[kpoint adjoint]
\tikzstyle{kpointadj}=[kpoint adjoint]
\tikzstyle{kpointconj}=[kpoint conjugate]
\tikzstyle{kpointtrans}=[kpoint transpose]

\tikzstyle{big kpoint}=[kpoint, minimum width=1.0 cm, minimum height=2mm, inner sep=4pt, text depth=1.5mm]

%%%%%%%%%%%%%%
% Discarding
%%%%%%%%%%%%%%
 \tikzstyle{upground}=[circuit ee IEC,thick,ground,rotate=90,scale=1.5]
 \tikzstyle{downground}=[circuit ee IEC,thick,ground,rotate=-90,scale=1.5]

% TiKZ style file generated by TikZiT. You may edit this file manually,
% but some things (e.g. comments) may be overwritten. To be readable in
% TikZiT, the only non-comment lines must be of the form:
% \tikzstyle{NAME}=[PROPERTY LIST]

% Node styles
\tikzstyle{discarding}=[fill=white, draw=black, shape=circle, style=upground]
\tikzstyle{smalldiscarding}=[fill=white, draw=black, style=upground, scale=0.5]
\tikzstyle{backdiscard}=[fill=white, draw=black, shape=circle, style=downground, scale=0.5]
\tikzstyle{smallbackdiscard}=[fill=white, draw=black, shape=circle, style=downground, scale=0.5]
\tikzstyle{state}=[fill=white, draw=black, style=triang, tikzit shape=rectangle]
\tikzstyle{kstate}=[fill=white, draw=black, style=kpoint, tikzit shape=rectangle]
\tikzstyle{kstateconj}=[fill=white, draw=black, style=kpoint conjugate, tikzit shape=rectangle]
\tikzstyle{kstateBIG}=[fill=white, draw=black, style=big kpoint, tikzit shape=rectangle]
\tikzstyle{effect}=[fill=white, draw=black, style=triangdag]
\tikzstyle{keffect}=[fill=white, draw=black, style=kpoint adjoint]
\tikzstyle{keffectconj}=[fill=white, draw=black, style=kpoint transpose]
\tikzstyle{morphdag}=[style=mapdag]
\tikzstyle{morph}=[style=hadamard]
\tikzstyle{WIDEmorph}=[style=hadamard, minimum width=14mm]
\tikzstyle{morphtrans}=[style=maptrans]
\tikzstyle{morphconj}=[style=mapconj]
\tikzstyle{CPMmorph}=[style=dmap]
\tikzstyle{CPMmorphconj}=[style=dmapconj]
\tikzstyle{CPMmorphdag}=[style=dmapdag]
\tikzstyle{CPMmorphtrans}=[style=dmaptrans]
\tikzstyle{CPMstate}=[fill=white, draw=black, style=triang, doubled]
\tikzstyle{CPMstateBIG}=[fill=white, draw=black, style={triang_lesssep}, doubled]
\tikzstyle{CPMkstate}=[fill=white, draw=black, style=kpoint, tikzit shape=rectangle, doubled]
\tikzstyle{CPMkstateconj}=[fill=white, draw=black, style=kpoint conjugate, tikzit shape=rectangle, doubled]
\tikzstyle{CPMkstateBIG}=[fill=white, draw=black, style=big kpoint, tikzit shape=rectangle, doubled]
\tikzstyle{CPMkeffect}=[fill=white, draw=black, style=kpoint adjoint, doubled]
\tikzstyle{CPMkeffectconj}=[fill=white, draw=black, style=kpoint transpose, doubled]
\tikzstyle{UHfB}=[fill=white, draw=black, style=triangdag, doubled, inner sep=-2pt]
\tikzstyle{leak}=[style=tinypoint, regular polygon rotate=-90]
\tikzstyle{leakfill}=[style=tinypoint, regular polygon rotate=-90, fill=black]
\tikzstyle{Z}=[style=dot, fill=green]
\tikzstyle{X}=[style=dot, fill=red]
\tikzstyle{black_dot}=[style=dot, fill=black]
\tikzstyle{white_dot}=[style=dot, fill=white]
\tikzstyle{qblack_dot}=[style=ddot, fill=black]
\tikzstyle{qwhite_dot}=[style=ddot, fill=white]
\tikzstyle{whitephase}=[style=wphase dot, fill=white]
\tikzstyle{qredphase}=[style=phase dot, fill=red]
\tikzstyle{qgreenphase}=[style=phase dot, fill=green]
\tikzstyle{had}=[style=hadamard, doubled]
\tikzstyle{box}=[style=hadamard]
\tikzstyle{classhad}=[style=hadamard]
\tikzstyle{antipode}=[style=anti]

% Edge styles
\tikzstyle{dottededge}=[-, dotted]
\tikzstyle{double edge}=[-, style=doubled, draw=black, tikzit draw={rgb,255: red,191; green,0; blue,64}]
\tikzstyle{new edge style 0}=[<-]

\usetikzlibrary{calc, positioning, shapes.geometric}
\usetikzlibrary{
	arrows,
	shapes,
	decorations,
	intersections,
	backgrounds,
	positioning,
	circuits.ee.IEC
	}

\DeclareRobustCommand{\tensormuchi}{\!
\mathbin{\begin{tikzpicture}[scale=0.45, every node/.style={scale=0.6}]
	\begin{pgfonlayer}{nodelayer}
		\node [style=none] (0) at (0, 0.25) {};
		\node [style=none] (3) at (0, -0.25) {};
		\node [style=none] (4) at (0, 0) {$\mu \chi$};
	\end{pgfonlayer}
	\begin{pgfonlayer}{edgelayer}
		\draw [bend left=90, looseness=2.50] (0.center) to (3.center);
		\draw [bend right=90, looseness=2.50] (0.center) to (3.center);
	\end{pgfonlayer}
\end{tikzpicture}}  \!}

\DeclareRobustCommand{\tensormubarchi}{\!
\mathbin{\begin{tikzpicture}[scale=0.45, every node/.style={scale=0.6}]
	\begin{pgfonlayer}{nodelayer}
		\node [style=none] (0) at (0, 0.25) {};
		\node [style=none] (3) at (0, -0.25) {};
		\node [style=none] (4) at (0, 0) {$\overline{\mu} \chi$};
	\end{pgfonlayer}
	\begin{pgfonlayer}{edgelayer}
		\draw [bend left=90, looseness=2.50] (0.center) to (3.center);
		\draw [bend right=90, looseness=2.50] (0.center) to (3.center);
	\end{pgfonlayer}
\end{tikzpicture}}  \!}

\DeclareRobustCommand{\tensorchi}{ \!
\mathbin{ \begin{tikzpicture}[scale=0.45, every node/.style={scale=0.6}]
	\begin{pgfonlayer}{nodelayer}
		\node [style=none] (0) at (0, 0.25) {};
		\node [style=none] (3) at (0, -0.25) {};
		\node [style=none] (4) at (0, 0) {$\chi$};
	\end{pgfonlayer}
	\begin{pgfonlayer}{edgelayer}
		\draw [bend left=90, looseness=1.75] (0.center) to (3.center);
		\draw [bend right=90, looseness=1.75] (0.center) to (3.center);
	\end{pgfonlayer}
\end{tikzpicture}} \! }

\DeclareRobustCommand{\tensorzeta}{\! 
\mathbin{ \begin{tikzpicture}[scale=0.45, every node/.style={scale=0.6}]
	\begin{pgfonlayer}{nodelayer}
		\node [style=none] (0) at (0, 0.25) {};
		\node [style=none] (3) at (0, -0.25) {};
		\node [style=none] (4) at (0, 0) {$\zeta$};
	\end{pgfonlayer}
	\begin{pgfonlayer}{edgelayer}
		\draw [bend left=90, looseness=1.75] (0.center) to (3.center);
		\draw [bend right=90, looseness=1.75] (0.center) to (3.center);
	\end{pgfonlayer}
\end{tikzpicture}} \! }

\DeclareRobustCommand{\tensormu}{ \!
\mathbin{ \begin{tikzpicture}[scale=0.45, every node/.style={scale=0.6}]
	\begin{pgfonlayer}{nodelayer}
		\node [style=none] (0) at (0, 0.25) {};
		\node [style=none] (3) at (0, -0.25) {};
		\node [style=none] (4) at (0, 0) {$\mu$};
	\end{pgfonlayer}
	\begin{pgfonlayer}{edgelayer}
		\draw [bend left=90, looseness=1.75] (0.center) to (3.center);
		\draw [bend right=90, looseness=1.75] (0.center) to (3.center);
	\end{pgfonlayer}
\end{tikzpicture}} \! }

\DeclareRobustCommand{\tensorxi}{\! 
\mathbin{ \begin{tikzpicture}[scale=0.45, every node/.style={scale=0.6}]
	\begin{pgfonlayer}{nodelayer}
		\node [style=none] (0) at (0, 0.25) {};
		\node [style=none] (3) at (0, -0.25) {};
		\node [style=none] (4) at (0, 0) {$\xi$};
	\end{pgfonlayer}
	\begin{pgfonlayer}{edgelayer}
		\draw [bend left=90, looseness=1.75] (0.center) to (3.center);
		\draw [bend right=90, looseness=1.75] (0.center) to (3.center);
	\end{pgfonlayer}
\end{tikzpicture}} \! }

\newtheorem{thm}{Theorem}
\newtheorem{definition}{Definition}
\newtheorem{proposition}{Proposition}

%%%%%%%%%%%%%%%%%%%%%%%%%%%%%%%%

\bibliographystyle{plainurl}% the mandatory bibstyle

\begin{document}

\title{Generalised tensors and traces} %TODO Please add

\author{
Pablo Arrighi \\ 
Université Paris-Saclay, Inria, CNRS, LMF, 91190 Gif-sur-Yvette, France \\ 
IXXI, Lyon, France \\
\texttt{pablo.arrighi@universite-paris-saclay.fr} \\
Amélia Durbec \\ 
Université Paris-Saclay, CNRS, LISN, 91190 Gif-sur-Yvette, France\\
\texttt{amelia.durbec@universite-paris-saclay.fr} \\
Matt Wilson\\
Department of Computer Science, University of Oxford, UK\\
HKU-Oxford Joint Laboratory for Quantum Information and Computation\\
\texttt{matthew.wilson@cs.ox.ac.uk}
}

\def\titlerunning{Generalised tensors and traces}
\def\authorrunning{Pablo Arrighi, Amélia Durbec, Matt Wilson}

\maketitle

\begin{abstract}
Tensors and traceouts are generalised, so that systems can be partitioned according to almost arbitrary logical predicates. One might have feared that the familiar interrelations between the notions of unitarity, complete positivity, trace-preservation, non-signalling causality, locality and localisability that are standard in quantum theory be jeopardized as the partitioning of systems becomes both logical and dynamical. Such interrelations in fact carry through, although a new notion, consistency, becomes instrumental. 
\end{abstract}

\section{Introduction}
%\todoin{Switch to EPTCS.}
%\todoin{Change the prensentation.}
%\todoin{Run a spell check.}
%\todoin{Put kets everywhere where this is quantum in picture.}

%\textit{On what ground can we discriminate which nodes go left of the tensor, and which nodes go right? Can we base this on the naming of nodes, their states, their proximity to other nodes, or even on combinations of these? Is there a sense in which the tensor product of a given network with itself is even well defined? }

On what grounds can one select `part' of a global system, and call it a `subsystem'? In textbook quantum theory one usually identifies subsystems as factors of tensor products. This is usually expressed w.r.t a basis, by identifying Cartesian factors within the basis, e.g. $\{\ket{0}, \ket{1}\}\times \{\ket{0}, \ket{1}\}$ within $\{\ket{00}, \ket{01}, \ket{10}, \ket{11}\}$. One could imagine, however, many other ways to select part of a system w.r.t a basis, based on `discriminating functions' or other logical predicates. For instance consider a collection of systems with names $u,v,w \dots$ and internal states given by some colors, e.g. black and white. The usual way to designate subsystems is simply by means of their names:  \[\includegraphics[scale = 0.2]{figs/chi.PNG}\] Instead, one could select those which are in a particular state, e.g. white: \[\includegraphics[scale = 0.2]{figs/mu.PNG}\] In a network one could ask for all of the neighbours of the node named $u$:  \[\includegraphics[scale = 0.2]{figs/neighbours.PNG}\] Given such a generalised notion of subsystem, one may also wonder whether there is a way to compute its reduced density matrix, by means of a \textit{generalised traceout}. One may also ask if there is a way of weaving subsystems back together, i.e. a \textit{generalised tensor product}.  In providing such a notion, we wish to disambiguate the fact that the same tensor symbol be used to mean two different things: \[\includegraphics[scale = 0.2]{figs/which_one_1.PNG}\] By writing: \[\includegraphics[scale = 0.2]{figs/which_one_2.PNG}\] 
to state that the former decomposition is based upon $\chi_{u}$ and the latter upon $\chi_{uv}$. 
Or even \[\includegraphics[scale = 0.2]{figs/which_one_white.PNG}\]
i.e. weaving back together subsystems selected based on their states.\\

%\todoin{Make it clear that we are base dependent but more general, that that this may help capture virtual subsystems up to unitary. Most general notion of subsystem around is an algebra, is therefore basis independent. But it generally does not not correspond to a tensor, even up to a unitary. Even when you look at it in the right basis, it does not factor out. This calls for a generalization of base-dependent the tensor product, so that it would be able to capture subsystems, up to unitary.}

The need for a generalised notion of subsystem is in fact ubiquitous \cite{moliner_space, GogiosoChurch, chiribella_subsystems}. For instance it is well-known that a product state according to one partition into subsystems, may in fact provide useful entanglement according to another \cite{zanardi_observe, zanardi_virtual}. It is well-known also that there exists quantum evolution that have a strict causal structure, and yet cannot be decomposed as local gates upon standard subsystems in a way that follows this causal structure \cite{Lorenz_2021}. In the theory of quantum reference frames \cite{hamette_qrt, Spekkens_refs}, one aims at taking the perspective of a superposed observer: but what subsystems will such an observer see, with what corresponding reduced density matrices? Could such an observer be modelled as a token, and isolated as in the above $\mu$ example? Superposing entire quantum networks, the neighbourhood of a particular node becomes a state-dependent notion \cite{RovelliLQG,ArrighiQCGD}, which is something that the textbook notion of subsystem cannot handle. This is because the notion of neighbourhood needs be promoted to an operator (e.g. à la $\chi^1_\mu$). All of these issues will benefit from a more liberal notion of subsystem, albeit w.r.t. a basis. To make the notion of subsystem basis independent, algebraic quantum field theory tends to think of them as Von Neumann algebras instead \cite{BratteliRobinson}. Then, the Wedderburn-Artin theorem states that up to a unitary, the algebra is a direct sum of full matrix algebras tensored with the identity, i.e. still something that the textbook notion of subsystem cannot handle. Again this calls for a generalised notion of subsystems, able to discriminate both subspaces (e.g. à la $\mu$) and tensor factors (e.g. à la $\chi_u$), together with the corresponding notions of reduced density matrix and traceouts.

%We propose definitions of subsystem, locality and causality in theories with fully quantized systems. By fully quantised we mean that we may have superpositions of populations of systems. The standard tensor product has some deficiency in discussing locality which we will address. The expression $\rho_A \otimes \rho_B$ is used to represent a state which decomposes as a state of system $A$ and a state on system $B$. Given that there is no a-priori notion of space in quantum theory it is unclear what the conceptual distinction between $\rho_A \otimes \rho_B$ and the expression $\rho_B \otimes \rho_A$ is, indeed this is a further interpretation that must be given by the user. When an opertor $A$ is deemed to be local on $A$ then in the former case $A$ could be written as $A \otimes I$ and in the latter case as $I \otimes A$. A further issue with using the tensor product to define locality of operators is that it misses on some more refined notions of subystem on which one might want to define locality, such as those which are picked out by their states rather than their names. One could imagine for instance defining operators as local in the sense in which they only act on systems which are in the white-state. Our goal is to define a notion of subsystem, which can be used in a theory with fully quantized systems, to define locality and cuasality in a way which avoids the above pitfalls.

The generalization of the tensor product we propose follows from a decompositional (as opposed to compositional) approach. That is, we discriminate which nodes go left of a tensor $\tensorchi$, and which nodes go right, based on an almost arbitrary logical predicate $\chi$, so that for any configuration $G$, we have $\ket{G}=\ket{G_\chi}\tensorchi\ket{G_{\overline{\chi}}}$. So that the decomposition be unambiguous, we ask that whenever $L$ and $R$ are not `consistent', i.e. not of the form $L=G_\chi$ and $R=G_{\overline{\chi}}$ for some $G$, then $\ket{L}\tensorchi \ket{R}=0$. Surprisingly, any function $\chi$ mapping $G$ a network into $G_\chi \subseteq G$ a subnetwork and fulfilling the condition that $G_{\chi\chi}=G_{\chi}$ is fit for our purposes: we refer to these as `restrictions'. Restrictions then lead to a natural notion of partial trace $(\ket{G}\bra{H})_{|\chi}=\ket{G_\chi}\bra{H_\chi}\braket{H_{\overline{\chi}}|G_{\overline{\chi}}}$ which is itself a completely positive and trace-preserving operation. This generalised partial trace is robust, e.g. comprehension $\zeta \sqsubseteq \chi$ implies $(\rho_{|\chi})_{|\zeta}=\rho_\zeta$.

What tells us that these envisioned $\tensorchi$ are not just ill-behaved mathematical symbols, and that they do deserve to be called `tensors'? What is the purpose of `tensors', anyway? The reason why we use tensors, in quantum theory, to begin with, is order to decompose systems into subsystems endowed with robust notions of locality and causality. Thus, when seeking to generalise tensors, this must be our main criterion for success.

Indeed, recall that in textbook quantum theory an operator is local on $u$ if and only if $A = A' \otimes I_{\overline{u}}$ and equivalently if $Tr[A \rho] = Tr [A\rho']$ whenever $Tr_{\overline{u}}[\rho] = Tr_{\overline{u}}[\rho']$. Thus, locality can be expressed either `operationally' by means of the composition $\otimes$, or `observationally' by means of traceout $Tr_{\overline{u}}[\cdot]$ in the Heisenberg picture. Similarly causality can be expressed using traces as the satisfaction of no-signalling constraints in the Schrödinger picture, but also by means of a decomposition into tensors of local gates in the operational picture, or again as the preservation of the locality of observables in the Heisenberg picture. The above series of facts not only express a reasonable interaction between the principles of causality and locality, but may actually be taken as evidence that the tensors and traceouts of textbook quantum theory yield appropriate notions of composition and traceouts upon subsystems. This paper relies on the same evidence to show that one can generalise tensors, traceouts, locality, and causality to much broader notions of subsystems than picking subfactors of tensor products, handling situations where system population across the induced partition is quantised.

Intuitively, local operators alter only a part $\chi$ of the network, and ignore the rest. In this paper we say that an operator is $\chi$-local whenever $\bra{H}A\ket{G}=\bra{H_\chi}A\ket{G_\chi}\braket{H_{\overline{\chi}}|G_{\overline{\chi}}}$ and prove the equivalence with the requirement that $A=A\tensorchi I$ and $\textrm{Tr}(A\rho)=\textrm{Tr}(A\rho_{|\chi})$. This therefore establishes that the Schr\"odinger, operational and Heisenberg pictures ways of phrasing locality remain equivalent for this generalised notion of locality. 
%We say that an operator $A$ is strictly $\chi$-local whenever it is both $\chi$-local and $\chi$-consistency-preserving. Interestingly, this corresponds to demanding that $A$, $A^\dagger A$ and $AA^\dagger$ be $\chi$-local, from which it follows that every $\chi$-local unitary is automatically $\chi$-consistency-preserving.

Intuitively, causal operators act over the entire network, yet respecting that effects on region $\zeta$ be fully determined by causes in region $\chi$. In this paper we say that operator $U$ is causal when $(U\rho U^\dagger)_{|\zeta}=(U\rho_{|\chi} U^\dagger)_{|\zeta}$ and prove equivalence with asking for $A$ $\zeta$-local to imply $U^\dagger A U$ $\chi$-local. Causality refers to the physical principle according to which information propagates at a bounded speed, localisability refers to the principle that all must emerge constructively from underlying local mechanisms, that govern the interactions of closeby systems. The theorem shows that causality implies localisability. This therefore again establishes that the Schr\"odinger, operational and Heisenberg pictures ways of phrasing causality remain equivalent for this generalised notion of causality.

%Intuitively an operator $A$ is local w.r.t the restriction $\chi$, if it alters only that part of a configuration which is selected by $\chi$, ignoring the remainder. Here we define $\chi$-locality of operators as the satisfaction of $\bra{H}A\ket{G}=\bra{H_\chi}A\ket{G_\chi}\braket{H_{\overline{\chi}}|G_{\overline{\chi}}}$ and then prove the equivalence of this definition with both locality in the operational picture, and dual locality in the Heisenberg picture. An operator considered to be causal between two restrictions $\chi,\zeta$, may on the other hand alter the entire network in one go, but its effects on region $\zeta$ must be fully determined by causes in region $\chi$. Here we define $\chi\zeta$-causality as the satisfaction of $(U\rho U^\dagger)_{|\zeta}=(U\rho_{|\chi} U^\dagger)_{|\zeta}$ as in the Schrödinger picture, and prove equivalence of this definition with dual causality in the Heisenberg picture, as well as a decomposition into local operators as in the operational picture. Whilst all the usual interrelations between these notions carry through, the path to them is full ambushes: consistency checking requires great care, and the usual notion of subrestriction requires an extra condition (`comprehension') before it behaves as expected---which fortunately vanishes in the name-preserving superselected space.

Another criterion for the success of these generalizations would be to exhibit potential applications. One application is that the generalised tensor product will be able to subsume both the old tensor and the direct sum for most purposes, at a moment where quantum calculi that account for both tensors and direct sums are trending \cite{Sheets,ManyWorlds}---as they are directly applicable for the study of indefinite causal order \cite{ConsistentICO} and local complete rewrite rules for quantum circuits \cite{QCircuitCompleteness}. Here we will also be able to formulate causality in some typically quantum contexts, for which we did not have the words before and yet were totally intuitive. In particular we will be able to give a precise meaning to the quantum operation taking oriented neighbourhood of radius $1$ around a node, in a scenario in which the neighbours are in a superposition. We will do this by modelling connectivity (left) between nodes with additional auxiliary nodes (right):
\begin{equation}\includegraphics[scale = 0.15]{figs/application_1.PNG} \quad \quad \quad \quad \includegraphics[scale = 0.15]{figs/app_formal_3.PNG}.\end{equation}

{\em Plan.} Sec. \ref{sec:tensors} defines configurations and their induced state space, generalised tensors and generalised traceouts. Sec. \ref{sec:locality} defines locality in its Schr\"odinger, operational and Heisenberg forms and proves their equivalences. Sec. \ref{sec:causality} defines causality in its various forms and proves their equivalences, including the fact that causal unitary operators can be implemented by a product of local ones. Sec. \ref{sec:conclusion} enumerates several perspective applications of the formalism. This conference paper was made as self-contained as possible, with most proofs in App. \ref{app:props} and that of the main Th. in App. \ref{app:th}. Yet some mathematical details are only available in the long version of this paper \cite{ArrighiQNT}, cited as [F] in the rest of the paper. 

%{\em Long Version} A longer version [F] deals with applications in the context of quantum evolutions over quantum superpositions of graphs, addressing issues such as the clash between unitarity with and node creation/destruction. These more complex applications may mask the simple idea behind generalised tensors and traces, which are the focus of the present conference paper. As these may serve other many purposes, they need to be presented in a standalone manner.

\section{Generalised tensors and traces}\label{sec:tensors}

\paragraph*{State space} We take a `system' to mean both a `state' and a `name', whereas a `configuration' is a set of systems having disjoint names. We will typically represent a name with a letter $u,v \dots$ and represent its available states with colors: \[\includegraphics[scale = 0.2]{figs/system.PNG}\] Formally we will pick names in a countable set of possible names $\mathcal{V}$, and states in some set $\Sigma$. In our working examples we refer to names of $\mathcal{V}$ with letters such as $u,v,w$ and use the state space $\Sigma := \{  black , \textrm{ }white \}$. We write $white.u$ to refer to a system named $u$ and having state $white$.
\begin{definition}[Systems]
For every pair of countable sets $\mathcal{V}$ and $\Sigma$ we call the elements of the set $S := \Sigma \times \mathcal{V}$ \textit{systems}. Each element $(\sigma , v)$ will be denoted $\sigma . v$ and for each $\sigma . v$ we call
\begin{itemize}
\item $ \sigma \in \Sigma $ the internal state of the system
\item $ v\in \mathcal{V}$ the vertex which supports the system
\end{itemize}
\end{definition}
Our formal definition of a configuration is given by the following:
\begin{definition}[Configurations]\label{def:graphs}
A configuration $ G$ is a finite set of systems such that
\begin{align}
\sigma .v,\ \sigma '.v'\in S \textrm{ and } v = v' \textrm{ implies } \sigma =\sigma' \label{eq:wellnamedness}
\end{align}
We denote $\varnothing$ the empty configuration.
We define the support of a configuration as $ V( G) :=\{v\ |\ \sigma .v\in G\}$. We denote by $ \mathcal{G}$ the set of all configurations.\\ We denote by $ \mathcal{H}$ the Hilbert space whose canonical basis is labelled by the elements of $ \mathcal{G}$.
\end{definition}

In other words, states are formal linear combinations of arbitrary configurations, of the form $\ket{\psi}=\sum_i \alpha_i \ket{G(i)}$. We will denote the canonical configuration basis using bra-ket notation in the following way: \[\includegraphics[scale = 0.2]{figs/state.PNG}\] and take formal linear combinations over these. Notice that varying populations are allowed within the above defined Hilbert space ${\cal H}$ e.g. \[\includegraphics[scale = 0.2]{figs/diff_size.PNG}.\] 
%To visualise the meaning of the constraint on $\sigma.v = \sigma'.v$ consider the following: \[\includegraphics[scale = 0.3]{figs/graphnotgraph.PNG}\]
%(a) is not a configuration since $u$ appears in two different systems, (b) is a configuration.

\paragraph*{Restrictions, traceouts, tensors} We partition systems in a modular fashion, i.e. parametrised by predicates, referred to as 'restrictions'.

\begin{definition}[Restrictions, comprehension]\label{def:traceouts}

Consider a function
\begin{align*}
\chi &: \mathcal{G}\rightarrow \mathcal{G}\\
\chi &: G\mapsto G_{\chi } \subseteq G
\end{align*}
It is called a restriction if and only if $\chi \circ \chi = \chi$.\\
A restriction is pointwise if and only if $ G_{\chi } =\bigcup _{\sigma .v\ \in \ G}\{\sigma .v\}_{\chi }$.\\
A restriction is extensible if and only if $\ G_{\chi} \subseteq H\subseteq G \Rightarrow H_{\chi} = G_{\chi}$.\\
We use the notation $ G_{\overline{\chi }} :=G\setminus G_{\chi }$ even though $ \overline{\chi }$ is not necessarily a restriction.\\
We use the notation $ G_{\chi \zeta } :=( G_{\chi })_{\zeta }$ and thus $ \chi \zeta :=\zeta \circ \chi $.\\
We use the notation $ [ \chi ,\zeta ] =0$ to mean that $ \chi \zeta =\zeta \chi $. \\
We say that $\zeta$ is comprehended within $\chi$ and write $ \zeta \sqsubseteq \chi $ if and only if $ G_{\chi \zeta } =G_{\zeta }$ and 
$\braket{H_{\overline{\zeta }}|G_{\overline{\zeta }}} =\braket{H_{\chi \overline{\zeta }}|G_{\chi \overline{\zeta }}}\braket{H_{\overline{\chi }} | G_{\overline{\chi }} }.$
\end{definition}
Restrictions satisfy a variety of convenient properties, such as
 $\chi \chi=\chi$ and $\chi\overline{\chi}=\varnothing$ as one may expect, i.e. picking $\chi$ then picking $\chi$ again is the same as picking $\chi$ once. 
%Given $ \zeta $ a restriction, $ \zeta ^{r}$ is the restriction such that $ G_{\zeta ^{r}}$ is induced by $ V( G_{\zeta })$ and its $ r$-neighbours in $ E( G)$.\\
%Similarly for  $  \harp{\zeta }^{r}$ and $  \harp{E}(G)$, where the $ r$-neighours are those pointing toward $ V( G_{\zeta })$.\\
The simplest example of a restriction is $\chi_u$, as informally presented in the introduction. In more formal terms $\sigma.v \in G_{\chi_u} \iff \sigma.v \in G \textrm{ and } v = u$. Another simple pointwise example is the $\mu$ given by $\sigma.u \in G_{\mu} \iff \sigma.u \in G \textrm{ and } \sigma = white$.
%\todoin{Examples. Take ${\cal V}$ finite sets of integers. Then, interprete system $\{1\}$ as a point, system $\{1,2\}$ as an edge,  system $\{1,2,3\}$ as a triangle etc. as in simplicial complexes. Now decide that $\chi_u$ selects all the systems that intersect $u$. Draw a picture where this leads to superpositions of neighbourhoods. NOTE, WHITE MEANS ANTICLOCKWISE DIRECTION, BLACK MEANS CLOSCKWISE DIRECTION}
 As a more complex example of a restriction, consider the set of names $\mathcal{V} := \{  1,2,3,12,31,23    \}$, and choose to interpret $black.xy$  as a directed edge from $x$ to $y$ and $white.xy$ as an edge from $y$ to $x$. We can then define restriction $\harp{\chi}_3^{r=1}$ which takes the incoming disk of radius $1$ around node $3$, for instance in the following configurations, on the left the node $white.31$ is interpreted as an edge from $1$ to $3$ meaning that it, along with $1$ is kept as part of $\harp{\chi}_3^{1}$. On the right, the node $black.23$ is interpreted by $\chi_3^{1}$ as an edge from $2$ to $3$ and so both $23$ and $2$ are kept by $\chi_{3}^{r=1}$:  \[\includegraphics[scale = 0.15]{figs/chi_13.PNG} \quad \quad \quad \quad \includegraphics[scale = 0.15]{figs/chi_23.PNG}\] 
 %They satisfy basic compositionality principles, meaning the can be put them together to form new ways of picking out subsystems, e.g.
%\begin{lemma}
%For every pair $\chi,\zeta$ of restrictions the function $ \chi \cup \zeta $ defined by  $ G_{\chi \cup \zeta } :=G_{\chi } \cup G_{\zeta }$ is a restriction.
%\end{lemma}
%\begin{proof}
%See Lem. [F].3, [F].4, [F].5.
%\end{proof}

%For any pointwise restriction $ \mu $ then $ \mu $$ \upchi $ is a restriction.

%Restrictions can be extended to quantum states as follows:
%$ \left(\ket{G}\bra{H}\right)_{\chi } :=\ket{G_{\chi }}\bra{H_{\chi }}$.\\

%Quite often partial traces are used just upon density matrices---but they are usually defined upon any matrix to begin with. Moreover they are linear, therefore it suffices to define them on $E_{ij}=\ket{i}\bra{j}$ matrix elements. For instance, as $\textrm{Tr}_2(E_{ij}\otimes E_{i'j'})=E_{ij}\delta_{i'j'}$, and then linear extension. This is the path we take here to generalize them.
We now show how to trace away the environment of a system defined by restriction.
\begin{definition}[Traceout]
The $\chi$-trace $ \rho _{|\chi }$ of any trace class operator $\rho$ is defined by linear extension of the function $ (\ket{G}\bra{H})_{|\chi } := \ket{G_{\chi }} \bra{H_{\chi }} \braket{ H_{\overline{\chi }} | G_{\overline{\chi }}}$. We will denote by $ \rho _{|\emptyset }$ the usual, full trace $ \text{Tr}( \rho )$.
\end{definition}
%Restrictions must not be confused with the partial traces $( .)_{|\chi }$ that they induce, as illustrated in the following example calculations for the pointwise restriction $\chi_{u}$

Restrictions must not be confused with the partial traces $( .)_{|\chi }$ that they induce, as illustrated in the following example calculations for the pointwise restriction $\chi_{u}$: \[\includegraphics[width=0.75\textwidth]{figs/chi_trace.PNG}\] In the first line, the ket and bra coincide beyond $\chi_u$, allowing to leave behind the restriction of the ket and bra on $\chi_u$. In the latter two computations, the environments disagree either due to population or state, thus the outside ket and bra do not coincide on the complement of $\chi_u$ and this takes the entire term to zero.

We now return to our more complex example, where states of some auxiliary nodes are interpreted by $\harp{\chi}$ as edges. Consider the state of equation $(1)$, in this case the $\harp{\chi}_3^{1}$-traceout returns a probabilistic mixture of two causal environments:
\[\includegraphics[scale = 0.2]{figs/new_deco_disk.PNG}\] %i.e. a more complex notion of reduced state for which we will find ourselves still able to speak of causality and locality.
Now, the tensor product corresponding to a restriction $\chi$ works by weaving a restricted configuration $\ket{G_{\chi}}$ and its complement $\ket{G_{\overline{\chi}}}$ back together, so long as they are consistent:%, as illustrated in Fig. \ref{fig:tensors} and formalized as follows.
\begin{definition}[Tensor]\label{def:tensors}

Every restriction $\chi$ induces a tensor $\tensorchi:{\cal H}\times{\cal H}\rightarrow {\cal H}$ defined by:\\
$ \ket{H} \tensorchi \ket{H'} :=\begin{cases}
\ket{G} & \text{when } H=G_{\chi } ,H'=G_{\overline{\chi }}\\
0 & \text{otherwise}
\end{cases}$\\
$ \ket{\psi } \tensorchi \ket{\psi '}$ is defined from the above, by bilinear extension.\\
$ \ket{G}\bra{H} \tensorchi \ket{G'}\bra{H'} :=\left(\ket{G} \tensorchi \ket{G'}\right)\left(\bra{H} \tensorchi \bra{H'}\right)$.\\
For any two operators $ A,B$, we define $ A\tensorchi B$ from the above by bilinear extension.
\end{definition}

Notice how textbook tensors work go from go from ${\cal H}_A\times {\cal H}_B$ to ${\cal H}_{AB}$, i.e. composing two different Hilbert spaces into a third. Here the approach is different; we really think of them as decomposing elements of the large Hilbert space into pieces that still belong the large Hilbert space. This internalizes the tensor, which becomes a binary operator of ${\cal H}$.

Let us give example calculations in terms of $\chi_u$:  \[   \includegraphics[scale = 0.25]{figs/chi_tensor.PNG} \]  In the top calculation, the pair $u,v$ can  be understood as weaving by $\chi_u$ with $u$ singled out and $v$ left behind. For the middle, $\chi_u$ does not ever single out $v$, so the product with $v$ on the left is $0$. Finally, on the bottom line, $\chi_u$ does not ever leave behind $u$ so the product must give $0$.

%\begin{figure}\centering
%\includegraphics[scale = 0.35]{figs/chi_tensor.PNG}
%\caption{\label{fig:tensors} {\em Generalised tensor product.} Top: The pair $u,v$ can  be understood as weaving by $\chi_u$ with $u$ singled out and $v$ left behind. Middle: $\chi_u$ does not ever single out $v$, so the product with $v$ on the left is $0$. Bottom: $\chi_u$ does not ever leave behind $u$ so the product must give $0$.}
%\end{figure}
%Basic examples are given in Fig. \ref{fig:tensors}. 
Another example based on the $\mu$-tensor shows that the splitting between subsystems is now quantized:\[\includegraphics[scale = 0.25]{figs/chi_entangle_2.PNG} \]
i.e. the number of systems on the left depends on the branch of the superposition. The same example illustrates the fact that entanglement is a restriction-dependent concept, as the above $\mu$-entangled state is in fact a $\chi_{uv}$-product state: \[\includegraphics[scale = 0.2]
{figs/chi_entangle_1.PNG}\]

%As an example consider the following: \[\tikzfig{figs/tensor_1}\] here the tensor returns zero since $v$ is not in the image of $\chi_{u}$. Alternatively the product \[\tikzfig{figs/tensor_2}\] is non-zero.

%To get an intuition for the strictness of this notion of tensor product, consider three disjoint non-empty configurations $G,\ M,\ H$ such that $G\cup M\cup H$ is defined. With the above definition, $\ket{G\cup M} \tensorchi \ket{M\cup H} =0$, whatever the $\chi$. This may seem unnecessarily strict; a more permissive alternative would have been to let $\ket{G\cup M} \tensorchi \ket{M\cup H} =\ket{G\cup M\cup H}$. This, however, would entail $A\tensorchi I=I$, which we will find we do not want (cf. Prop. \ref{prop:gatelocality}). 

%\color{red} We now consider sum examples of products of operators, showing the sense in which both direct sums and standard tensor products can be modeled by restriction-based tensor products. Consider in general operators which act trivially on the empty graph, that is, operators $A = A' \oplus I_{\emptyset}$ and $B = B' \oplus I$  \color{black}

We now consider classes of states which are compatible with each other, i.e. $\ket{\psi}$ and $\ket{\psi'}$ are $\chi$-consistent if whenever $\ket{\phi}$ contains configuration $G$ and $\ket{\phi'}$ contains configuration $G'$ then $G$ should be given by $G = \chi (G \cup G')$. An important class of operators for us will be those which do not generate inconsistency within states. 
%Whenever configurations are consistent they remain consistent after being acted on.
\begin{definition}[Consistency]
$ \ket{\psi } ,\ket{\psi '}$ are $ \chi $-consistent if and only if $ \braket{G| \psi }\braket{G'|\psi '} \neq 0$ implies $ \ket{G} \tensorchi \ket{G'} \neq 0$.\\ 
$ \rho ,\sigma $ are $ \chi $-consistent if and only if $ \rho _{GH} \sigma _{G'H'} \neq 0$ implies $ \ket{G} \tensorchi \ket{G'} \neq 0\neq \ket{H} \tensorchi \ket{H'}$, where $ \rho _{GH} :=\bra{G} \rho \ket{H}$. An operator $A$ is $ \chi $-consistent-preserving if and only if $ \bra{H} A\ket{G_{\chi }} \neq 0$ entails $ \ket{H} \tensorchi \ket{G_{\overline{\chi }}} \neq 0$, and $ \bra{H} A^{\dagger }\ket{G_{\chi }} \neq 0$ entails $ \ket{H} \tensorchi \ket{G_{\overline{\chi }}} \neq 0$. 
\end{definition}

Using generalised partial traces and tensors sometimes feels like a step into the unknown. Our old intuitions about traceouts and tensors guide us, but in some occasions they mislead us. In the long version of this paper [F] we have, as a result, checked all conditions of applications of several basic facts about the way these tensor and tracing operators interact with one another, leading to the Toolbox of table \ref{tab:toolbox}.
%\\{\em On notations. Throughout the paper, Greek symbols $\rho ,\sigma $ represent trace class operators (e.g. states), capital letters $A,B$ represent bounded operators (e.g. transformations and observables.}

A good rule of thumb is that usual intuitions about $A\tensorchi B$ will carry through provided that $\chi $-consistency conditions are met. In fact much of the attention in the proofs is spent keeping track of which terms get zeroed by the function $\tensorchi$. Another good rule of thumb is that our usual intuitions about subsystems $\zeta $ of a wider system $\chi $ will carry through, provided that the comprehension condition is met, which is true of most natural cases, especially under name-preservation conditions, see Prop. [F].2. In appendix \ref{app:props} we show that the partial trace is completely positive up to extensions by generalized tensors.

%Sometimes the two rules of thumb interact, e.g. it is the name-preservation assumption that helps meet $\chi $-consistency, as in Prop. \ref{prop:unitaryextension}.

\paragraph*{Properties} 
An early attempt to define a (non-modular) partial trace for quantum causal configuration dynamics actually failed to exhibit positivity-preservation, i.e. there exists $\rho $ non-negative with $\rho _{|\chi }$ not non-negative \cite{ArrighiQCGD}. 
Here we show that partial traces are positive-preserving. In fact we check they are completely-positive-preserving, meaning that they remain positive-preserving when tensored with the identity, as required for general quantum operations. We do the same for trace-preservation. 

We denote by $(( .)_{|\chi }  \tensorzeta   I)$ the map $\rho  \tensorzeta   \sigma \ \mapsto \rho _{|\chi }  \tensorzeta   \sigma $ linearly extended to the whole of $\mathcal{H}$.

\begin{proposition}[Traceouts positivity-preservation and trace-preservation]\label{prop:traceouts}
The map \ $ \rho \mapsto (( .)_{|\chi }  \tensorzeta   I)( \rho )$ over trace class operators is positive-preserving.\\
If moreover $ \ket{G_{\zeta \chi }}  \tensorzeta   \ket{G_{\overline{\zeta }}} \neq 0$, then the same map is trace-preserving.
\end{proposition}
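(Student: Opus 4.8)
The plan is to present $((.)_{|\chi}\tensorzeta I)$ as a sandwich of three manifestly well-behaved maps: an isometric ``un-weaving'' along $\zeta$, the ordinary (completely positive, trace-preserving) operation $((.)_{|\chi}\otimes I)$ on a genuine tensor product, and the adjoint ``re-weaving''. Positivity-preservation is then immediate, and trace-preservation reduces to a single support check --- which is exactly where the hypothesis $\ket{G_{\zeta\chi}}\tensorzeta\ket{G_{\overline\zeta}}\neq 0$ enters.

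First I would record a Kraus form for the traceout. For each $K\in\mathcal G$ let $\hat E_K$ act on the canonical basis by $\hat E_K\ket G:=\braket{K|G_{\overline\chi}}\ket{G_\chi}$; since from $\ket{G_\chi}$ together with $K=G_{\overline\chi}$ one recovers $G=G_\chi\cup K$, each $\hat E_K$ sends an orthonormal subfamily of the basis to an orthonormal set and everything else to $0$, hence is a partial isometry. A one-line basis computation gives $\rho_{|\chi}=\sum_K\hat E_K\rho\hat E_K^\dagger$ (only $K=G_{\overline\chi}=H_{\overline\chi}$ survives on $\ket G\bra H$) and $\sum_K\hat E_K^\dagger\hat E_K=I$, so $(.)_{|\chi}$ is completely positive and trace-preserving (trace-preservation also follows from $\text{Tr}((\ket G\bra H)_{|\chi})=\braket{H_\chi|G_\chi}\braket{H_{\overline\chi}|G_{\overline\chi}}=\braket{H|G}$, Lem.~[F].2 of Table~\ref{tab:toolbox}). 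Next I would introduce the un-weaving $W:\mathcal H\to\mathcal H\otimes\mathcal H$, $W\ket G:=\ket{G_\zeta}\otimes\ket{G_{\overline\zeta}}$. Since $G\mapsto(G_\zeta,G_{\overline\zeta})$ is injective ($G=G_\zeta\cup G_{\overline\zeta}$), $W$ carries the canonical basis to an orthonormal family and hence extends to an isometry; thus $W^\dagger W=I$ while $WW^\dagger$ is the orthogonal projection onto $\operatorname{im}W$, and a short computation gives $W^\dagger(\ket A\otimes\ket C)=\ket A\tensorzeta\ket C$, which is $\ket{A\cup C}$ if $\ket A\tensorzeta\ket C\neq 0$ and $0$ otherwise.

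The core of the argument is the factorisation
\[((.)_{|\chi}\tensorzeta I)(\rho)\;=\;W^\dagger\big(((.)_{|\chi}\otimes I)(W\rho W^\dagger)\big)W,\]
which I would check on each basis element $\ket G\bra H$, using that it carries the canonical decomposition $\ket G\bra H=(\ket{G_\zeta}\bra{H_\zeta})\tensorzeta(\ket{G_{\overline\zeta}}\bra{H_{\overline\zeta}})$ (since $\ket{G_\zeta}\tensorzeta\ket{G_{\overline\zeta}}=\ket G$, likewise for $H$) --- precisely what makes the prescription $\rho\tensorzeta\sigma\mapsto\rho_{|\chi}\tensorzeta\sigma$ unambiguous --- together with Step 2 and the definition of the operator-level $\tensorzeta$. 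Granting this, positivity-preservation is automatic: $((.)_{|\chi}\otimes I)$ is completely positive by the first step, and conjugation by the bounded operators $W,W^\dagger$ preserves positivity. For trace-preservation it suffices, by linearity of $\text{Tr}$, to take $\rho\geq 0$; then $\text{Tr}(W\rho W^\dagger)=\text{Tr}(\rho)$ since $W^\dagger W=I$, and $\text{Tr}(((.)_{|\chi}\otimes I)(\sigma))=\text{Tr}(\sigma)$ since $(.)_{|\chi}$ is trace-preserving, so the only possible loss is at the outer step, where $\text{Tr}(W^\dagger\tau W)=\text{Tr}(\tau WW^\dagger)$ equals $\text{Tr}(\tau)$ as soon as the positive operator $\tau$ is supported inside $\operatorname{im}W$. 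Now $W\rho W^\dagger$ is supported on $\operatorname{im}W=\overline{\operatorname{span}}\{\ket{G_\zeta}\otimes\ket{G_{\overline\zeta}}:G\in\mathcal G\}$, and applying $\sum_K(\hat E_K\otimes I)(\cdot)(\hat E_K\otimes I)^\dagger$ moves its support into $\overline{\operatorname{span}}\{\ket{G_{\zeta\chi}}\otimes\ket{G_{\overline\zeta}}:G\in\mathcal G\}$; the hypothesis $\ket{G_{\zeta\chi}}\tensorzeta\ket{G_{\overline\zeta}}\neq 0$ says exactly that each such vector equals $W\ket F$ for some $F$, hence lies in $\operatorname{im}W$, so no trace is lost. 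Trace-norm convergence of all $K$-sums is routine, as $\sum_K\hat E_K^\dagger\hat E_K=I$.

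I expect the factorisation to be the main obstacle: one must track exactly which terms the operator-level $\tensorzeta$ sends to zero during the re-weaving, so the verification leans on the toolbox entries for $A\tensorchi B$ (Lem.~[F].6) and on the fact that $\ket A\tensorchi\ket B\neq 0$ forces $\ket A\tensorchi\ket B=\ket{A\cup B}$; everything else is bookkeeping. One can also avoid the sandwich and argue by a direct basis computation: with $E_K\ket G:=\braket{K\,|\,G_\zeta\setminus G_{\zeta\chi}}\,(\ket{G_{\zeta\chi}}\tensorzeta\ket{G_{\overline\zeta}})$ one checks $((.)_{|\chi}\tensorzeta I)(\rho)=\sum_K E_K\rho E_K^\dagger$ unconditionally, yielding positivity, while for trace-preservation one evaluates $\text{Tr}(((.)_{|\chi}\tensorzeta I)(\ket G\bra H))=\braket{H_\zeta\setminus H_{\zeta\chi}\,|\,G_\zeta\setminus G_{\zeta\chi}}\braket{H_{\zeta\chi}\cup H_{\overline\zeta}\,|\,G_{\zeta\chi}\cup G_{\overline\zeta}}$ under the hypothesis and notes that the two overlap factors are simultaneously nonzero only when $G=H$ (from $G_{\zeta\chi}\cup G_{\overline\zeta}$ one recovers $G_{\zeta\chi}$ and $G_{\overline\zeta}$ separately by the toolbox, and $G_\zeta\setminus G_{\zeta\chi}$ then pins down $G_\zeta$), so the value is $\braket{H|G}=\text{Tr}(\ket G\bra H)$.
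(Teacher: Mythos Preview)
Your argument is correct, and the main route you take is genuinely different from the paper's. The paper proceeds by a direct basis computation: for positivity it expands an arbitrary $\ket{\psi}\bra{\psi}$, applies the map term-by-term, and regroups the result as $\sum_K\ket{\phi^K}\bra{\phi^K}$ (this is your Kraus form, written out by hand without naming it); for trace-preservation it evaluates $((.)_{|\chi}\tensorzeta I)(\ket{G}\bra{H})=\ket{G_{\zeta\chi\cup\overline\zeta}}\bra{H_{\zeta\chi\cup\overline\zeta}}\braket{H_{\zeta\overline\chi}|G_{\zeta\overline\chi}}$ under the hypothesis and takes the full trace using Lem.~[F].2, which is essentially the ``direct basis'' variant you sketch at the very end. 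Your primary approach instead packages the whole map as $W^\dagger\circ((.)_{|\chi}\otimes I)\circ(W(\cdot)W^\dagger)\circ W$ with $W$ the un-weaving isometry, so that positivity drops out of standard CP theory and the trace question becomes a support check against $\operatorname{im}W$. This buys conceptual clarity---it makes explicit that the generalised trace is an ordinary partial trace seen through a partial isometry, and it pinpoints the role of the hypothesis as exactly the condition that no mass leaks out of $\operatorname{im}W$ after tracing. The paper's computation is shorter and needs no detour through positive $\rho$ for the trace part, but your factorisation is reusable and arguably more illuminating about \emph{why} the construction behaves like a quantum channel.
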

%\begin{proof}
%See appendix \ref{app:props}.
%\end{proof}

\section{Locality}\label{sec:locality}
Since a restriction $\chi$ isolates a part of each possible configuration, one can introduce the notion of a $\chi $-local operator, one that only acts on the restriction $\chi $, leaving its complement $\overline{\chi }$ unchanged. I.e. a $\chi $-local operator acts only on the left of $ \tensorchi $.

\begin{figure}\centering
\includegraphics[width=\textwidth]{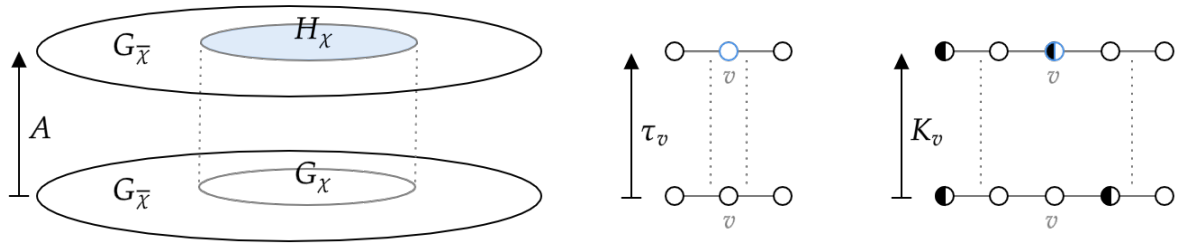}
\caption{\label{fig:locality}{\em Local operators.} Left: $A\ \chi $-local will only modify $G_{\chi }$. Middle: $\zeta _{v}$-local operator $\tau _{v}$ just toggles a 'black/blue' bit inside the system at $v$. Right: $\chi _{v}$-local operator $K_{v}$ is a reversible update rule which computes the future state of the system at $v$ (according to $M$, see Fig. \ref{fig:causality}) and toggles it out, whilst attempting to leave the rest mostly unchanged, cf. Th. \ref{th:blockdecomposition}.}
\end{figure}

%\paragraph*{Schrödinger picture}
\begin{definition}[Locality]\label{def:locality}

$A$ is $\chi $-local if and only if 
\begin{equation}
\bra{H} A\ket{G} =\bra{H_{\chi }} A\ket{G_{\chi }}\braket{H_{\overline{\chi }}|G_{\overline{\chi }}}\label{eq:locality}
\end{equation}
\end{definition}
%{\em Soundness.} [Unitary case] Suppose $U$ is unitary $\chi $-local. Then $U^{\dagger } U=U^{\dagger } U=I$, which is $\chi $-local by Lem. [F].2.\hfill$\qed$\\

%\paragraph*{Operational picture}

The standard way to state the locality of $A$ is to write it as $A=B\otimes I$. Here follows a generalization of this statement, a key point that allows for this generalisation is that tensoring with the identity zeroes out the non-local terms of $B$.

\begin{proposition}[Operational locality]\label{prop:gatelocality}
$A$ is $\chi $-local if and only if $A=A\tensorchi I$.\\
For all $ B$, $B\tensorchi I$ is $\chi $-local.\\
%Moreover, if $ B$ is n.-p., so is $B\tensorchi I$. 
\end{proposition}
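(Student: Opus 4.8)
The plan is to prove the two directions of the equivalence using the explicit formula for $A \tensorchi I$ from Lem. [F].6, namely $A \tensorchi I = \sum_{G,H} A_{G_\chi H_\chi} I_{G_{\overline\chi} H_{\overline\chi}} \ket{G}\bra{H}$, and the Toolbox fact $A \tensorchi I = A \tensorchi I_{\overline\chi}$, which tells us that the only surviving terms are those in which $G_{\overline\chi}$ and $H_{\overline\chi}$ are both genuine complements, i.e. graphs that can legitimately sit on the right of $\tensorchi$. First, for the direction ($\Leftarrow$), suppose $A = A \tensorchi I$. Then for arbitrary $G, H$ I would compute $\bra{H} A \ket{G} = \bra{H}(A \tensorchi I)\ket{G}$ by reading off the coefficient of $\ket{G}\bra{H}$ in the sum; the term is $A_{G_\chi H_\chi}$ times the coefficient contributed by the $I$-factor on the complements, which one must argue equals $\braket{H_{\overline\chi}|G_{\overline\chi}}$ exactly when the tensor is nonzero and $0$ otherwise — matching Eq. \eqref{eq:locality}. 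Here the subtlety is that $\ket{G}\bra{H}$ can receive contributions only from the unique decomposition $G = (G_\chi) \cup (G_{\overline\chi})$, $H = (H_\chi) \cup (H_{\overline\chi})$ with the right halves being valid complements, so one needs to invoke the Toolbox identities "if $\ket{G}\tensorchi\ket{G'} \neq 0$ then $(\ket{G}\tensorchi\ket{G'})_\chi = \ket{G}$" and $\chi\chi = \chi$ to guarantee uniqueness and to identify $(G_\chi, G_{\overline\chi})$ as the only decomposition.

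For the direction ($\Rightarrow$), suppose $A$ is $\chi$-local, i.e. $\bra{H}A\ket{G} = \bra{H_\chi}A\ket{G_\chi}\braket{H_{\overline\chi}|G_{\overline\chi}}$ for all $G,H$. I would show $A$ and $A \tensorchi I$ have the same matrix elements in the graph basis. Using Lem. [F].6, $\bra{H}(A\tensorchi I)\ket{G} = A_{G_\chi H_\chi} \cdot (\text{complement coefficient})$; the complement coefficient is nonzero only when $G_{\overline\chi}, H_{\overline\chi}$ are valid right-halves, in which case, using Lem. [F].2 ($\braket{H|G} = \braket{H_\chi|G_\chi}\braket{H_{\overline\chi}|G_{\overline\chi}}$) to handle the bookkeeping, it equals $\braket{H_{\overline\chi}|G_{\overline\chi}}$. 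Comparing with the $\chi$-locality equation gives equality. The one point requiring care is the degenerate case where $G$ or $H$ cannot be written as $G_\chi \cup G_{\overline\chi}$ with valid complement — but since $\overline\chi$ is defined by $G_{\overline\chi} = G \setminus G_\chi$, such a decomposition always exists for $G$ itself, and the potential failure is only that the pair $(G_\chi, G_{\overline\chi})$ might not weave back via $\tensorchi$; one checks that in exactly those cases both sides vanish (the $\chi$-locality RHS vanishes because $\braket{H_{\overline\chi}|G_{\overline\chi}}$ forces $H_{\overline\chi} = G_{\overline\chi}$, and then consistency of $\chi$ as an oblique projection forces the weave to be nonzero, so in fact this case does not obstruct).

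Finally, the second assertion — that $B \tensorchi I$ is $\chi$-local for every $B$ — follows immediately once the equivalence is established, provided we know $(B\tensorchi I)\tensorchi I = B \tensorchi I$; this is a short computation using Lem. [F].6 applied twice, or equivalently the idempotency $\chi\chi = \chi$ together with $A \tensorchi I = A \tensorchi I_{\overline\chi}$, which shows that tensoring with $I$ a second time changes nothing. Then by the ($\Leftarrow$) direction just proved, $B \tensorchi I$ is $\chi$-local.

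I expect the main obstacle to be the careful tracking of which basis terms $\ket{G}\bra{H}$ get zeroed by $\tensorchi$ — precisely the "consistency checking requires great care" caveat flagged in the introduction. Concretely, the delicate step is verifying that the coefficient extracted from $A \tensorchi I$ is genuinely $\braket{H_{\overline\chi}|G_{\overline\chi}}$ (and not merely proportional to it, or nonzero on a larger set of index pairs), which hinges on the uniqueness of the $\chi$-decomposition and on $\overline\chi$ behaving well enough under the Toolbox identities even though it is not itself a restriction.
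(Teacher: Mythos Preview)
Your approach is correct and is essentially the paper's: both arguments compute the matrix elements of $A\tensorchi I$ via Lem.~[F].6 and compare with Eq.~\eqref{eq:locality}. The paper, however, is much shorter because it extracts once and for all the ``preliminary'' identity
\[
A\tensorchi I \;=\; \sum_{G,H\in\mathcal{G}} \bra{H_\chi}A\ket{G_\chi}\,\braket{H_{\overline\chi}|G_{\overline\chi}}\;\ket{H}\bra{G},
\]
from which both the equivalence and the second assertion are one-line consequences. Your worries about ``valid right-halves'' and whether $(G_\chi,G_{\overline\chi})$ weaves back are unnecessary: the sum in Lem.~[F].6 is already over \emph{all} graphs $G,H$, and by the very definition of $\tensorchi$ one always has $\ket{G_\chi}\tensorchi\ket{G_{\overline\chi}}=\ket{G}\neq 0$, so there is no zeroing case-analysis to perform here --- the consistency bookkeeping has already been absorbed into Lem.~[F].6. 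Similarly, for the second assertion you can read off $\chi$-locality of $B\tensorchi I$ directly from the preliminary formula, rather than going through the idempotency $(B\tensorchi I)\tensorchi I = B\tensorchi I$ (though the paper does also record that identity).
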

%\begin{proof}
%See the appendix \ref{app:props}.
%\end{proof}
The operational picture as presented here is in fact a constructive tool, \textit{any} operator $A: \mathcal{G}_{\chi} \rightarrow \mathcal{G}_{\chi}$ on the image $\mathcal{G}_{\chi}$ of $\chi$ on $\mathcal{G}$ can be extended to a local operator on $\mathcal{G}$ by computing $\bar{A} := A \tensorchi I$, this always works since $(A \tensorchi I) \tensorchi I = A \tensorchi I$. Whilst locality captures the intuition that a $\chi$-local operator $A$ affects only the $\chi$ part of a configuration, it turns out that some local operators may introduce inconsistencies between the part $\chi$ they act on, and their surroundings. This motivates the following definition:
\begin{definition}[Strict locality]
$A$ is strictly $ \chi $-local if and only if $ A$ is $ \chi $-local and $ \chi $-consistent-preserving.
\end{definition}
Actually, the fact that an operator $A$ may sometimes be $\chi $-local but not strictly $\chi $-local, comes from the fact we allow for operators that lose norm, see Fig. \ref{fig:strictlocality}. Indeed, $A$ may change the state of $\ket{G_{\chi }}$, in such a way that makes it inconsistent with $\ket{G_{\overline{\chi }}}$, resulting in a loss of norm. None of these issues arise if $A$ is unitary.
%Actually, the fact that an operator $A$ may sometimes be $\chi $-local but not strictly $\chi $-local, comes from the fact we allow for operators that lose norm, see appendix D for an explicit example. Indeed, $A$ may change the state of $\ket{G_{\chi }}$, in such a way that makes it inconsistent with $\ket{G_{\overline{\chi }}}$, resulting in a loss of norm. None of these issues arise if $A$ is unitary and $\chi$-local, since in this case $A^{\dagger}A$ and $A A^{\dagger}$ are also $\chi$-local.
\begin{proposition}\label{prop:strictlocality}
$A$ is strictly $\chi $-local if and only if, $A^{\dagger } A$ and $AA{^{\dagger }}^{\ }$ are $\chi $-local.\\ 
In particular, every unitary $\chi $-local is strictly $\chi $-local.
\end{proposition}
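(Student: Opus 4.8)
The plan is to establish the stated equivalence by its two halves, using Proposition~\ref{prop:gatelocality} (operational locality: $A$ is $\chi$-local iff $A = A\tensorchi I$) and the composition identities of Table~\ref{tab:toolbox}, and then to read off the unitary claim from the backward half. Before either direction I would record three small facts. First, if $A$ is $\chi$-local then so is $A^{\dagger}$: conjugating the defining equation~\eqref{eq:locality} and using $\overline{\braket{H_{\overline{\chi}}|G_{\overline{\chi}}}} = \braket{G_{\overline{\chi}}|H_{\overline{\chi}}}$ gives $\bra{H}A^{\dagger}\ket{G} = \bra{H_{\chi}}A^{\dagger}\ket{G_{\chi}}\braket{H_{\overline{\chi}}|G_{\overline{\chi}}}$. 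Second, the defining clause of $\chi$-consistent-preservation is already symmetric in $A$ and $A^{\dagger}$. Third, $I$ is $\chi$-consistent-preserving, since $\braket{H|G_{\chi}} \neq 0$ forces $H = G_{\chi}$ and then $\ket{H}\tensorchi\ket{G_{\overline{\chi}}} = \ket{G_{\chi}}\tensorchi\ket{G_{\overline{\chi}}} = \ket{G} \neq 0$. In particular $A$ strictly $\chi$-local implies $A$ and $A^{\dagger}$ are both strictly $\chi$-local.

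For the forward direction I would then argue as follows. Since $A$ and $A^{\dagger}$ are $\chi$-local, Proposition~\ref{prop:gatelocality} gives $A = A\tensorchi I$ and $A^{\dagger} = A^{\dagger}\tensorchi I$; since $A$, $A^{\dagger}$ and $I$ are $\chi$-consistent-preserving, the product rule of Lem.~[F].11 yields $A^{\dagger}A = (A^{\dagger}\tensorchi I)(A\tensorchi I) = (A^{\dagger}A)\tensorchi I$, and symmetrically $AA^{\dagger} = (AA^{\dagger})\tensorchi I$; a second application of Proposition~\ref{prop:gatelocality} turns these equalities into the $\chi$-locality of $A^{\dagger}A$ and $AA^{\dagger}$.

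For the backward direction I would take as hypotheses that $A$ is $\chi$-local (part of the left-hand condition and the running assumption of this subsection) and that $A^{\dagger}A$ and $AA^{\dagger}$ are $\chi$-local, and deduce that $A$ is $\chi$-consistent-preserving. The engine is a norm comparison. Fix a graph $G$. Expanding $A\ket{G_{\chi}} = \sum_{H}\bra{H}A\ket{G_{\chi}}\ket{H}$ and using $A = A\tensorchi I$ together with the identity $(A\tensorchi I)\ket{G} = A\ket{G_{\chi}}\tensorchi\ket{G_{\overline{\chi}}}$ of Lem.~[F].11, we get $A\ket{G} = \sum_{H}\bra{H}A\ket{G_{\chi}}\,\bigl(\ket{H}\tensorchi\ket{G_{\overline{\chi}}}\bigr)$; the surviving summands are those with $\ket{H}\tensorchi\ket{G_{\overline{\chi}}} \neq 0$, for which the toolbox gives $\ket{H}\tensorchi\ket{G_{\overline{\chi}}} = \ket{H\cup G_{\overline{\chi}}}$ with $(H\cup G_{\overline{\chi}})_{\chi} = H$, so distinct such $H$ contribute orthogonal unit vectors. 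Hence $\|A\ket{G}\|^{2} \leq \|A\ket{G_{\chi}}\|^{2}$, with equality exactly when $\bra{H}A\ket{G_{\chi}} = 0$ for every $H$ with $\ket{H}\tensorchi\ket{G_{\overline{\chi}}} = 0$. On the other hand, $\chi$-locality of $A^{\dagger}A$ with bra and ket equal to $G$, together with $\braket{G_{\overline{\chi}}|G_{\overline{\chi}}} = 1$, gives $\|A\ket{G}\|^{2} = \bra{G}A^{\dagger}A\ket{G} = \bra{G_{\chi}}A^{\dagger}A\ket{G_{\chi}} = \|A\ket{G_{\chi}}\|^{2}$; so the equality case holds and $\bra{H}A\ket{G_{\chi}} \neq 0$ forces $\ket{H}\tensorchi\ket{G_{\overline{\chi}}} \neq 0$. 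As $G$ was arbitrary this is the first clause of $\chi$-consistent-preservation; the second clause follows by running the same argument for $A^{\dagger}$, which is $\chi$-local and has $(A^{\dagger})^{\dagger}A^{\dagger} = AA^{\dagger}$ $\chi$-local. Therefore $A$ is strictly $\chi$-local. The unitary statement is then immediate: for $U$ unitary $\chi$-local we have $U^{\dagger}U = UU^{\dagger} = I$, which is $\chi$-local by Lem.~[F].2, so the backward direction applies; concretely the estimate collapses to $1 = \|U\ket{G}\|^{2} \leq \|U\ket{G_{\chi}}\|^{2} = 1$.

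I expect the backward direction to require the most care: one must check that the operation $X \mapsto X\tensorchi\ket{G_{\overline{\chi}}}$ annihilates precisely the inconsistent components of $A\ket{G_{\chi}}$ while leaving the surviving ones orthonormal, so that the inequality $\|A\ket{G}\|^{2} \leq \|A\ket{G_{\chi}}\|^{2}$ is tight on exactly the right set — this is where the toolbox facts about when $\ket{H}\tensorchi\ket{G'} \neq 0$ and about its value are doing the real work. (It should also be noted that the backward implication genuinely uses that $A$ is itself $\chi$-local: $\chi$-locality of $A^{\dagger}A$ and $AA^{\dagger}$ alone does not force $A$ to be $\chi$-local, as a name-disjoint ``swap'' unitary already illustrates.)
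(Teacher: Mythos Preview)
Your argument is correct and follows essentially the same route as the paper's proof: the forward direction uses Prop.~\ref{prop:gatelocality} and Lem.~[F].11 to push $\tensorchi I$ through the product, and the backward direction compares $\|A\ket{G}\|^{2}$ with $\|A\ket{G_{\chi}}\|^{2}$ via $\chi$-locality of $A^{\dagger}A$ to force the norm-equality that is equivalent to $\chi$-consistency-preservation. Your presentation is a little more explicit than the paper's (you spell out that $I$ is $\chi$-consistent-preserving and unpack the inequality-with-equality-case rather than stating it as an equivalent ``norm condition''), and your closing observation that the backward implication genuinely needs $A$ itself to be $\chi$-local is well taken --- the paper's own proof also uses this as a standing hypothesis.
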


The definition of strict $\chi$-locality is also motivated by the fact that if $A,B$ are strictly $\chi$-local operators then the product $AB$ is strictly $\chi$-local. This is a direct consequence of Prop. \ref{prop:gatelocality} and Lem. [F].11. Conversely, the fact that local but not strictly local operators are not composable by is a direct consequence of Prop. \ref{prop:strictlocality}. An important corollary for the soundness of the notion of locality is that unitary local operators are always composable.

\begin{figure}\centering
\includegraphics[width=0.3\textwidth]{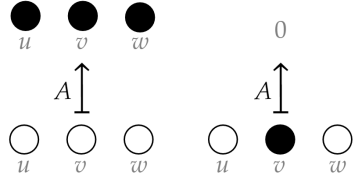}
\caption{\label{fig:strictlocality}{\em Strict locality.} Consider a restriction $\chi:G \rightarrow G_{\chi}$ given by (i) If $G$ contains only black nodes $G_{\chi} =  G$ else (ii) $G_{\chi}$ is the subconfiguration of $G$ containing only its white nodes. Let $\mathbf{flip}(G)$ be the configuration with $white.u \in \mathbf{flip}(G) \Leftrightarrow black.u \in G$. Consider operator $A$ such that (i) $A\ket{G} = 0$ if $G$ has any black nodes and (ii) $A\ket{G} = \ket{\mathbf{flip}(G)}$ otherwise, that is when all nodes are white. $A$ is local but not consistency-preserving. Indeed, $\ket{white.u}$ and $\ket{black.v}$ are consistent since $white.u = \chi(white.u \cup black.v)$, but $A\ket{white.u} = \ket{black.u}$ and $\ket{black.v}$ are not consistent since $\chi(black.u \cup black.v) = black.u \cup black.v \neq black.u$.}
\end{figure}

%\paragraph*{Heisenberg picture}

Locality can furthermore be phrased in this Heisenberg picture, the result of a $\chi $-local observable on $\rho $ solely depends on its partial trace $\rho _{|\chi }$.
\begin{proposition}[Dual locality]\label{prop:duallocality}

$A$ is $\chi $-local if and only if $ ( A \rho )_{|\emptyset } =( A \rho _{|\chi })_{|\emptyset }$.
\end{proposition}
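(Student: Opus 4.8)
I would prove the two implications of Proposition~\ref{prop:duallocality} separately, using the explicit formula for the full trace from the Toolbox, namely $(\rho\ket{G}\bra{H})_{|\emptyset} = \braket{H|\rho|G}$, applied to rank-one $\rho = \ket{G}\bra{H}$ and then extending by linearity.

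First I would unpack both sides on basis elements. Taking $\rho = \ket{G}\bra{H}$, the left-hand side is $(A\ket{G}\bra{H})_{|\emptyset} = \braket{H|A|G}$. For the right-hand side, I first compute $\rho_{|\chi} = (\ket{G}\bra{H})_{|\chi} = \ket{G_\chi}\bra{H_\chi}\braket{H_{\overline\chi}|G_{\overline\chi}}$, so that $(A\rho_{|\chi})_{|\emptyset} = \braket{H_\chi|A|G_\chi}\braket{H_{\overline\chi}|G_{\overline\chi}}$. Hence the claimed identity $(A\rho)_{|\emptyset} = (A\rho_{|\chi})_{|\emptyset}$, read on all basis operators $\ket{G}\bra{H}$, is \emph{exactly} the defining equation \eqref{eq:locality} of $\chi$-locality: $\braket{H|A|G} = \braket{H_\chi|A|G_\chi}\braket{H_{\overline\chi}|G_{\overline\chi}}$.

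For the forward direction ($A$ is $\chi$-local $\Rightarrow$ the dual identity): given any trace class $\rho = \sum_{G,H}\rho_{GH}\ket{G}\bra{H}$, apply linearity of $(.)_{|\emptyset}$, of $(.)_{|\chi}$, and of $A\,(.)$ to reduce to the rank-one case, then substitute \eqref{eq:locality}. For the converse, I would evaluate the hypothesised operator identity $(A\rho)_{|\emptyset} = (A\rho_{|\chi})_{|\emptyset}$ on the specific choice $\rho = \ket{G}\bra{H}$ for arbitrary $G,H \in \mathcal{G}$; by the computation above this yields \eqref{eq:locality} for that $G,H$, and since $G,H$ were arbitrary, $A$ is $\chi$-local by definition. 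One mild point to be careful about: $\rho = \ket{G}\bra{H}$ is trace class (finite rank), so it is a legitimate test operator, and the maps involved are all linear, so no convergence subtlety arises.

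The only real content is the bookkeeping identity $(\ket{G}\bra{H})_{|\chi} = \ket{G_\chi}\bra{H_\chi}\braket{H_{\overline\chi}|G_{\overline\chi}}$ combined with $(\sigma)_{|\emptyset} = \mathrm{Tr}(\sigma)$ picking out the $\bra{H}\,\cdot\,\ket{G}$ coefficient; everything else is linear extension. I do not expect a genuine obstacle here — the proposition is essentially a restatement of Definition~\ref{def:locality} after tracing out, and the main thing to get right is that testing against rank-one operators $\ket{G}\bra{H}$ recovers the pointwise locality condition, so that the ``if'' direction is not vacuous. If one wanted to avoid invoking the rank-one test operators explicitly, an alternative is to note that the map $\rho \mapsto (A\rho)_{|\emptyset} - (A\rho_{|\chi})_{|\emptyset}$ is linear in $\rho$ and vanishes identically iff it vanishes on a basis, which is again \eqref{eq:locality}.
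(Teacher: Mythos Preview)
Your proposal is correct and follows essentially the same route as the paper's own proof: both reduce to rank-one $\rho=\ket{G}\bra{H}$, compute $(A\rho)_{|\emptyset}=\bra{H}A\ket{G}$ and $(A\rho_{|\chi})_{|\emptyset}=\bra{H_\chi}A\ket{G_\chi}\braket{H_{\overline\chi}|G_{\overline\chi}}$, and observe that equality of these for all $G,H$ is exactly Definition~\ref{def:locality}. Your added remarks on linear extension and the legitimacy of rank-one test operators are sound but not strictly needed beyond what the paper implicitly assumes.
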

%\begin{proof}
%See the appendix \ref{app:props}.
%\end{proof}
The above proposition states that $\rho _{|\chi }$ contains the part of $\rho $ that is observable by $\chi $-local operators. The next proposition states that $\rho _{|\chi }$ does contain anything more.

\begin{proposition}[Local tomography]\label{prop:tomography}
If for all $A$ $\chi $-local $ ( A\rho )_{|\emptyset } =( A\sigma )_{|\emptyset }$ then $ \rho _{|}{}_{\chi } =\sigma _{|\chi }$. \\
Moreover, if $ \rho $, $ \sigma $ are name-preserving and for all $A$ $\chi $-local and name-preserving $ ( A\rho )_{|\emptyset } =( A\sigma )_{|\emptyset }$ then $ \rho _{|}{}_{\chi } =\sigma _{|\chi }$. 
\end{proposition}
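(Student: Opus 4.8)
# Proof Proposal for Proposition (Local tomography)

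The plan is to prove the contrapositive of the first statement: if $\rho_{|\chi} \neq \sigma_{|\chi}$, then there exists some $\chi$-local operator $A$ witnessing $(A\rho)_{|\emptyset} \neq (A\sigma)_{|\emptyset}$. First I would use the explicit formula from the Toolbox (the fifth row, $\rho_{|\chi} = \sum_{G,H: G_{\overline\chi} = H_{\overline\chi}} \rho_{GH}\ket{G_\chi}\bra{H_\chi}$) to understand what $\rho_{|\chi} - \sigma_{|\chi}$ looks like as an operator on $\mathcal{H}$. If this difference is nonzero, there exist basis graphs $K, L$ with $\bra{K}(\rho_{|\chi} - \sigma_{|\chi})\ket{L} \neq 0$. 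The idea is then to build a $\chi$-local operator $A$ that "probes" precisely the matrix element $(K,L)$ of the $\chi$-trace. The natural candidate is something like $A = \ket{L}\bra{K} \tensorchi I$, which is $\chi$-local by Prop.~\ref{prop:gatelocality}, and then to compute $(A\rho)_{|\emptyset}$ using Prop.~\ref{prop:duallocality}, which tells us $(A\rho)_{|\emptyset} = (A\rho_{|\chi})_{|\emptyset}$. So the whole computation reduces to evaluating $\big((\ket{L}\bra{K}\tensorchi I)\,\rho_{|\chi}\big)_{|\emptyset}$, and by the Toolbox rules for how $\tensorchi I$ acts and how $(.)_{|\emptyset}$ extracts a coefficient, this should pick out $\bra{K}\rho_{|\chi}\ket{L}$ (possibly up to dealing with the $\overline\chi$-part inner products, which collapse to $1$ on matched terms). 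Hence $(A\rho)_{|\emptyset} \neq (A\sigma)_{|\emptyset}$, contradiction.

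The key steps in order: (1) assume $\rho_{|\chi} \neq \sigma_{|\chi}$ and extract a nonzero matrix element $\bra{K}(\rho_{|\chi}-\sigma_{|\chi})\ket{L}$ between canonical basis graphs $K,L$; (2) observe that since $\rho_{|\chi}$ is supported on graphs of the form $G_\chi$, we may assume $K = K_\chi$ and $L = L_\chi$, i.e.\ $K,L$ are $\chi$-invariant; (3) define $A := \ket{L}\bra{K}\tensorchi I$, which is $\chi$-local by Prop.~\ref{prop:gatelocality}; (4) apply Prop.~\ref{prop:duallocality} to get $(A\tau)_{|\emptyset} = (A\,\tau_{|\chi})_{|\emptyset}$ for $\tau \in \{\rho,\sigma\}$; (5) compute $(A\,\tau_{|\chi})_{|\emptyset}$ explicitly using Lem.~[F].11 ($(A\tensorchi I)\ket{G} = A\ket{G_\chi}\tensorchi\ket{G_{\overline\chi}}$) together with the $(.)_{|\emptyset}$ formulas, showing it equals $\bra{K}\tau_{|\chi}\ket{L}$; (6) conclude the two traces differ, completing the contrapositive. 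For the "moreover" (name-preserving) part, I would rerun the same argument but check that when $\rho,\sigma$ are name-preserving, the witnessing operator $\ket{L}\bra{K}\tensorchi I$ can be taken name-preserving — this should follow because $\rho_{|\chi}-\sigma_{|\chi}$ being nonzero for name-preserving operators forces a nonzero matrix element between graphs $K,L$ with $V(K)=V(L)$ (name-preservation is inherited by the $\chi$-trace), so $\ket{L}\bra{K}$ is itself name-preserving, and tensoring with $I$ preserves that.

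The main obstacle I anticipate is step (5): carefully verifying that $\big((\ket{L}\bra{K}\tensorchi I)\,\tau_{|\chi}\big)_{|\emptyset}$ really equals $\bra{K}\tau_{|\chi}\ket{L}$ and not some contaminated sum. The subtlety is that $\ket{L}\bra{K}\tensorchi I$, when expanded via Lem.~[F].6 as $\sum_{G,H}(\ket{L}\bra{K})_{G_\chi H_\chi} I_{G_{\overline\chi} H_{\overline\chi}}\ket{G}\bra{H}$, only has the "$I$" factor enforcing $G_{\overline\chi} = H_{\overline\chi}$, so one must confirm that the full trace against $\tau_{|\chi}$ — whose own support is constrained to matched $\overline\chi$-parts — combines cleanly. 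There is a genuine consistency/zeroing bookkeeping issue here: terms of $\ket{L}\bra{K}\tensorchi I$ where $\ket{L}\tensorchi\ket{G_{\overline\chi}} = 0$ vanish, and one needs that the surviving terms are exactly those contributing $\bra{K}\tau_{|\chi}\ket{L}$. I expect this to work out because $\tau_{|\chi}$ only ever pairs $\chi$-invariant graphs coming from a common ambient $G$, so consistency is automatically satisfied on the relevant terms, but this is precisely the kind of place where, as the authors warn, "consistency checking requires great care."
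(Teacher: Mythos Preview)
Your proposal is correct and follows essentially the same approach as the paper: the paper also takes, for each pair $(G_\chi,H_\chi)$, the test operator $E^{H_\chi G_\chi}:=\ket{H_\chi}\bra{G_\chi}\tensorchi I$ (your $A=\ket{L}\bra{K}\tensorchi I$ with $K=G_\chi$, $L=H_\chi$) and computes directly that $\big(E^{H_\chi G_\chi}\rho\big)_{|\emptyset}=\alpha_{G_\chi H_\chi}$, the corresponding matrix element of $\rho_{|\chi}$. Your worry about step~(5) is exactly what the paper's calculation resolves: since $\tau_{|\chi}$ is supported on $\chi$-invariant graphs (whose $\overline\chi$-part is $\emptyset$), the only surviving $\overline\chi$-component in the expansion of $E^{H_\chi G_\chi}$ against $\tau_{|\chi}$ is the empty graph, and the sum collapses cleanly to the desired coefficient.
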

%\begin{proof}
%See the appendix \ref{app:props}.
%\end{proof}

%Notice that if we stick to n.-p. observables, we limit our power of observation. For instance whenever $V( G)\not{\corresponds } V( H)$, n.-p. observables cannot tell the difference between:
%\begin{equation*}
%\rho =\frac{1}{2}\left(\ket{G} +\ket{H}\right)\left(\bra{G} +\bra{H}\right) \quad \text{and} \quad \tilde{\rho } =\frac{1}{2}\ket{G}\bra{G} +\frac{1}{2}\ket{H}\bra{H} \ 
%\end{equation*}
%i.e. they cannot read-out superpositions of supports coherently. That is unless states are n.-p., too. 

\section{Causality}\label{sec:causality}

\begin{figure}\centering
\includegraphics[width=\textwidth]{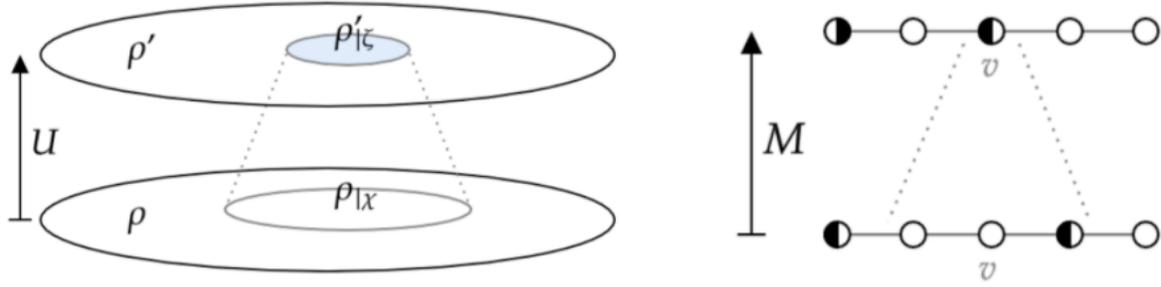}
\caption{\label{fig:causality}{\em Causal operators.} Left: $U\ \chi \zeta $-causal may modify the whole of $\rho $. But it is such that $\rho '_{|\zeta }$ solely depends on $\rho _{|\chi }$. Right: $\chi _{v} \zeta _{v}$-causal operator $M$ propagates particles. They bounce on borders.}
\end{figure}

\begin{figure}\centering
\includegraphics[scale = 0.2]{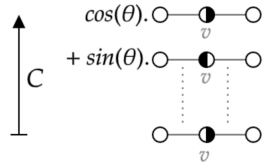}
\caption{\label{fig:quantumcausality} {\em Causal operators yielding superpositions of states.} Left: Instead of simply iterating $M$, we could iterate $MC$, where $C$ acts on every right-moving (resp. left-moving) particle by placing it in a superposition of being right-moving with amplitude $cos( \theta )$ and left-moving with amplitude $sin( \theta )$ (resp. left-moving with amplitude $cos( \theta )$ and right-moving with amplitude $-sin( \theta )$.}
\end{figure}
A key principle in physical frameworks is the propagation of information, in standard quantum information-theoretic settings causality constraints can be imposed using traces \cite{Beckman}, here we generalise this approach to our definition of trace.
Consider two restrictions $\chi,\zeta$ over configurations, a $\chi \zeta $-causal operator is one which restricts information propagation by imposing that region $\zeta $ at the next time step depends only upon region $\chi $ at the previous time step. Subject to this constraint, $\chi \zeta $-causal operator will be permitted to edit the entirety of the configurations they act on.

%\paragraph*{Schrödinger picture}

\begin{definition}[Causality]\label{def:causality}

$U$ is $\chi \zeta $-causal if and only if
\begin{equation}
\left( U\rho U^{\dagger }\right)_{|\zeta } =\left( U\rho _{|\chi } U^{\dagger }\right)_{|\zeta }\label{eq:causalitly}
\end{equation}
%$U$ is name-preserving $\chi \zeta $-causal if and only if it is n.-p. and for all $ \rho $ n.-p., Eq. \eqref{eq:causalitly} holds.
\end{definition}
Directly from then definition it follows that causality constraints are composable, that is, if $U$ is $\chi \zeta$-causal and $V$ is $\zeta \eta$ -causal, then $VU$ is $\chi\eta$-causal.
%\paragraph*{Heisenberg picture}
In the Heisenberg picture, it turns out that $\chi \zeta $-causality actually states that whatever can be $\zeta $-locally observed at the next time step, could be $\chi $-locally observed at the previous time step.

\begin{proposition}[Dual causality]\label{prop:dualcausality}
$U$ is $\chi \zeta $-causal if and only if for all $ A$ $ \zeta $-local, $ U^{\dagger } AU$ is $ \chi $-local.\\
%If $U$ is $\chi \zeta $-causal, then for all $A$ strictly $\zeta $-local, $U^{\dagger } AU$ is strictly $\chi $-local.
\end{proposition}
%\begin{proof}
%See the appendix \ref{app:props}.
%\end{proof}

%\paragraph*{Operational picture}

Causality is a basic Physics principle, anchored on the postulate that information-propagation is bounded by the speed of light. Yet causality is a top-down axiomatic constraint. When modelling an actual Physical phenomenon, we need a bottom-up, constructive way of expressing the dynamics. We usually proceed by describing it in terms of local interactions, happening simultaneously and synchronously. The following shows that causal operators are always of that form, under the restriction that the evolution be name-preserving.
\begin{definition}[Name-preservation]\label{def:np}
Let $A$ be an operator over configurations. It is said to be name-preserving (n.-p. for short) if and only if $V(G) \neq V( H)$ implies $\bra{H} A\ket{G} =0$.
\end{definition}

\begin{thm}[Block decomposition]\label{th:blockdecomposition}
Let $ \zeta _{v}$ be the pointwise restriction such that \[ \zeta _{v}(\{\sigma '.u\}) :=\begin{cases}
\{\sigma '.u\} & \text{if} \ u=v\\
\emptyset  & \text{otherwise}
\end{cases}\].\\
Consider $ U$ a name-preserving unitary operator over $ \mathcal{H}$, which for all $ v\in \mathcal{V}$ is $ \chi _{v} \zeta '_{v}$-causal, with $ \zeta _{v} \sqsubseteq \zeta '_{v}$ and $\chi_v$ extensible. Let $ \Sigma '=\{0,1\} \times \Sigma $, let $ \mathcal{G} '$ and let $ \mathcal{H} '$ be the corresponding Hilbert space. \\
Let $ \mu $ be the pointwise restriction such that $ \mu (\{b.\sigma .u\}) :=\begin{cases}
\{0.\sigma .u\} & \text{if} \ b=0\\
\emptyset  & \text{otherwise}
\end{cases} \ $. \\
Let $ \mathcal{G} '$ be the set of finite subsets of $ \mathcal{S} ':=\Sigma '\times \mathcal{V}$ and $ \mathcal{H} '$ the Hilbert space whose canonical o.n.b is $ \mathcal{G} '$. Over $ \mathcal{H} '$, there exists $ \tau _{v}$ a strictly $ \zeta _{v}$-local unitary and $ K_{v}$ a strictly $ \xi _{v}$-local unitary such that
\begin{equation*}
\begin{aligned}
\forall \ket{\psi } \in \mathcal{H} '_{\mu } \cong \mathcal{H} ,\ \left(\prod _{v\ \in \ \mathcal{V}} \tau _{v}\right)\left(\prod _{v\ \in \ \mathcal{V}} K_{v}\right)\ket{\psi } & =U\ket{\psi }
\end{aligned}
\end{equation*}
where $ \xi _{v} :=\mu \chi _{v} \cup \overline{\mu } \zeta _{v}$. In addition,$ \ [ K_{x} ,K_{y}] =[ \tau _{x} ,\tau _{y}] =0$. 
\end{thm}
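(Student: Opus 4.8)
The plan is to run the classical ``causality implies localizability'' argument for automaton-like dynamics (as for quantum cellular automata and quantum causal graph dynamics) inside the generalised-tensor formalism of Section~\ref{sec:tensors}. I read the claimed decomposition as a two-layer circuit: the layer $\prod_{v}K_{v}$ computes the dynamics forward, each $K_{v}$ reading the $\chi_{v}$-neighbourhood of the current configuration and producing $U$'s local output at $v$ into a fresh register there; the layer $\prod_{v}\tau_{v}$ then cleans up pointwise at each site so that the composite lands back in $\mathcal H'_{\mu}\cong\mathcal H$ and equals $U$. The added bit in $\Sigma'=\{0,1\}\times\Sigma$ is exactly the bookkeeping needed to make this reversible: I treat the $\mu$-sector (bit $0$) as data not yet consumed by the dynamics and the $\overline{\mu}$-sector (bit $1$) as data already produced as output, which is precisely why the footprint of $K_{v}$ should be $\xi_{v}=\mu\chi_{v}\cup\overline{\mu}\zeta_{v}$ --- read input in the neighbourhood, write output at the site --- while $\tau_{v}$ stays confined to site $v$. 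Throughout I identify $\mathcal H'_{\mu}$ with $\mathcal H$ via $\iota:\ket{G}\mapsto\ket{0.G}$.

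First I would reduce the hypothesis. From $\zeta_{v}\sqsubseteq\zeta'_{v}$ and Lem.~[F].8, which gives $(\rho_{|\zeta'_{v}})_{|\zeta_{v}}=\rho_{|\zeta_{v}}$, I apply $(\,\cdot\,)_{|\zeta_{v}}$ to both sides of the $\chi_{v}\zeta'_{v}$-causality identity~\eqref{eq:causalitly} to conclude that $U$ is $\chi_{v}\zeta_{v}$-causal for every $v$, and work with $\zeta_{v}$ from then on. Next I would establish a reconstruction property for $U^{\dagger}$: using unitarity of $U$, name-preservation, $\zeta_{v}$ being pointwise, and $\chi_{v}\zeta_{v}$-causality for all $v$, I want to show that $U^{\dagger}$ also propagates information boundedly, i.e.\ that the input system at $v$ is determined by the output configuration on a bounded region around $v$. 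This is the generalised analogue of ``unitarity $+$ causality $\Rightarrow$ the inverse is causal'', and it is what will let the pointwise clean-up gates do their job.

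Then I would construct $K_{v}$ on $\mathcal H'$ by specifying its action on the $\xi_{v}$-footprint: on $\mathcal H'_{\mu}$ it takes the bit-$0$ sub-configuration selected by $\chi_{v}$ to the superposition in which the site $v$ carries $U$'s output datum tagged with bit $1$, the amplitudes being those extracted from $U$ via Prop.~\ref{prop:dualcausality}; the reconstruction property of the previous step makes this map norm-preserving with orthogonal images, hence extendable to a unitary on $\mathcal H'$. I would then check that $K_{v}$ has footprint exactly $\xi_{v}$ --- so it is $\xi_{v}$-local by Prop.~\ref{prop:gatelocality} --- and is consistency-preserving, hence strictly $\xi_{v}$-local. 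For $x\neq y$ the gates $K_{x}$ and $K_{y}$ write to disjoint ancilla sites and only read the bit-$0$ sector, so by Lem.~[F].11 they commute; the product $\prod_{v}K_{v}$ is well defined since each basis graph is finite. The gate $\tau_{v}$ I would take to act on the system at $v$ only: relabel the now-determined output datum from bit $1$ back to bit $0$ and uncompute the residual bit-$0$ input using the reconstruction property; being pointwise it is trivially strictly $\zeta_{v}$-local with $[\tau_{x},\tau_{y}]=0$. Finally I would push a basis state $\ket{0.G}$ through $\prod_{v}K_{v}$ and then $\prod_{v}\tau_{v}$, using the Toolbox identities for the $\xi_{v}$- and $\zeta_{v}$-tensors, Lem.~[F].11, and $\chi_{v}\zeta_{v}$-causality, and verify the result is $\iota(U\ket{G})$; linearity then finishes it on $\mathcal H'_{\mu}\cong\mathcal H$.

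The hard part will be producing a single $K_{v}$ that is simultaneously genuinely unitary on all of $\mathcal H'$ (not just an isometry on a subspace), strictly $\xi_{v}$-local with the exact footprint $\xi_{v}$, and such that the two layers compose to $U$ exactly. This is where the reconstruction property is essential --- without it the forward layer overwrites input information that cannot then be recovered by a site-local $\tau_{v}$ --- and where the bulk of the bookkeeping lives: I would have to check repeatedly that none of the $\xi_{v}$- or $\zeta_{v}$-tensors silently zeroes a term, that the intermediate states stay $\xi_{v}$- and $\zeta_{v}$-consistent so that Lem.~[F].11 keeps applying, and that the ancilla sector really performs the separation it is meant to. As the paper emphasises, it is precisely this consistency checking, rather than any single clever trick, that makes the generalised setting delicate.
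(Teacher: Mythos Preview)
Your plan departs from the paper's argument in a way that creates a genuine gap, and also misses the main simplification.

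The paper does \emph{not} build $K_{v}$ by hand. It first extends $U$ to $U':=U\tensormu I$ on $\mathcal H'$ (Prop.~\ref{prop:unitaryextension}), shows via Prop.~\ref{prop:causalextension} that $U'$ is $\xi_{v}\zeta_{v}$-causal, takes $\tau_{v}$ to be the bare bit-flip $b.\sigma.v\mapsto(\neg b).\sigma.v$ at site $v$ (nothing is ``uncomputed''), and then \emph{defines} $K_{v}:=U'^{\dagger}\tau_{v}U'$. Strict $\xi_{v}$-locality of $K_{v}$ is then immediate from dual causality (Prop.~\ref{prop:dualcausality}), and $[K_{x},K_{y}]=0$ is immediate from $[\tau_{x},\tau_{y}]=0$ since conjugation by a fixed unitary preserves commutators. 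The product telescopes to $\bigl(\prod_{v}\tau_{v}\bigr)\bigl(\prod_{v}K_{v}\bigr)=\tau\,U'^{\dagger}\tau\,U'$, and evaluating this on $\ket{G_{\mu}}\tensormu\ket{\emptyset}$ yields $U\ket{G_{\mu}}$ using only that $U$ is name-preserving (so $U^{\dagger}\ket{\emptyset}=\ket{\emptyset}$). No ``reconstruction property'' for $U^{\dagger}$ is ever invoked.

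Your constructive picture breaks on the graph constraint~\eqref{eq:wellnamedness}: in $\mathcal G'$ the bit lives inside the internal state, so each vertex $v$ still hosts \emph{at most one} system. There is therefore no ``fresh register'' at $v$: if your $K_{v}$ writes $U$'s output as a bit-$1$ system at $v$, it has necessarily erased the bit-$0$ input that was there. Then for any $w$ with $v$ in the support of $\chi_{w}$, your $K_{w}$ can no longer read that input, so $K_{v}$ and $K_{w}$ do not commute --- your argument ``write to disjoint ancilla sites and only read the bit-$0$ sector'' tacitly assumes both sectors can coexist at a vertex, which they cannot. The same obstruction hits your $\tau_{v}$: if it has to ``uncompute the residual bit-$0$ input using the reconstruction property'', it must consult more than the single system at $v$ and cannot be $\zeta_{v}$-local. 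The fix is precisely the conjugation trick above, which sidesteps all of this bookkeeping and makes the inverse-causality detour unnecessary.
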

%\begin{proof} See appendix \ref{app:th}.
%\end{proof}

Such causal-to-local theorems are renowned to be difficult and follow a long tradition starting with just two systems \cite{Beckman}; moving on to three \cite{SchumacherWestmoreland}; a line of them \cite{SchumacherWerner}; static networks \cite{ArrighiUCAUSAL}, and finally node-preserving but connectivity-varying networks \cite{ArrighiQCGD} a.k.a `quantum causal graph dynamics'. The point of the above theorem is that the decomposition carries through to arbitrary restrictions $\chi_v$. From a methodological point of view, we used this theorem a test bench, to make sure that we had put together a set of mathematical tools that would be sufficient to combine and establish non-trivial results.

\section{Perspectives}\label{sec:conclusion}

\color{black}
%{\em Perspective applications.} 
We ran into generalised tensors and traces whilst seeking to provide rigorous kinematics and fully quantum evolutions over quantum superpositions of networks, equipped with rigorous notions of locality and causality. This was for the sake of taking networks models of complex systems, into the quantum realm. This is achieved in the full version [F] and could be useful for instance, in the field of Quantum Gravity. 
%we are now in a position to provide discrete-time versions of quantum graphity \cite{QuantumGraphity1,QuantumGraphity2}, thereby placing space and time on a equal footing, and demanding strict causality, instead of approximations à la Lieb-Robinson bound \cite{EisertSupersonic}. We are also in a better position to study the statuses of causality and unitarity in LQG \cite{RovelliLQG} and CDT \cite{LollCDT}: are these jeopardized by the Feynman path-based dynamics used in these theories?\\
Similarly, in the field of Quantum Computing, we now have a framework in which to model fully-quantum distributed computing devices, possibly including dynamics over indefinite causal orders \cite{BruknerDynamics} by means of causal unitary operators over superpositions of directed networks. But generalised of tensors and traceouts may lead to a whole range of byproducts:
\begin{description}
\item {\em Towards base-independence}. To make the notion of subsystem base-independent, algebraic quantum field theory tends to think of them as Von Neumann algebras instead \cite{BratteliRobinson}\cite{gogosiofunc}. This approach has been formalised in \cite{GogiosoChurch} in the the setting of categorical quantum mechanics, where the tensor product is a seen as a binary operator between states associated to commuting algebras. Such tensor products are therefore partially defined and remain abstract mathematical objects. The generalised tensor products of the present paper are everywhere defined and constructive, but they are base-dependent. It would be interesting to bridge the gap between these two notions to get the best of both worlds. One way to go about this is to use the Wedderburn-Artin theorem. %, which states that up to a unitary, Von Neumann algebras are direct sums of full matrix algebras tensored with the identity, i.e. of the form $A\tensorchi I$ for a well-chosen $\chi$. Another route to follow would be to take, as base-independent restrictions, any oblique projector, regardless of the graph structure.
\item {\em Decomposition techniques, causal-to-local}. These generalized operators were essential to the theorem representing causal unitary operators by means of local unitary gates. Many variants of this question are still open however, even for static networks over a handful of systems \cite{Beckman,SchumacherWestmoreland, Lorenz_2021}, as soon as we demand that the representation be exact. Recent approaches \cite{VanrietveldeRouted} to phrasing the answers to these questions make the case for annotating wires with a type system specifying which subspace will flow into them; this in turn has the flavour of a generalized tensor product. This suggests that the generalized operators, by means of their increased expressiveness, may be key to reexpress and prove a number of standing conjectures.
\item {\em Construction techniques, local-to-causal.} In \cite{ArrighiAQG} the authors provide a hands-on, concrete way of expressing a family of unitary evolutions over network configurations allowing for quantum superpositions of connectivities. One may wonder whether these addressable quantum gates, if extended to become able to split and merge, could be proven universal in the class of causal operators over quantum networks. 
\item {\em Fusion products between parts of constrained configuration spaces}. In most physical theories, the set of allowed configurations is constrained. For instance, charges and fields are constrained by the Gauss law. It follows that the tensor product between two regions of space may be ill-defined, for instance because both of them follow the Gauss law at the individual level, but not when they are placed next to one another. The hereby devised generalised tensor product is robust enough to handle these situations whilst preserving most desired algebraic properties, simply by sending them to the null vector. The gauge-invariant `fusion product' of \cite{FreidelFusion} focusses upon this issue in the continuum by following a different, gauge-invariant and partially-defined approach.
\item {\em Flexible notions of entanglement, between logical spaces}. The generalized tensor $\tensorchi$ allows us to define the entanglement between its the left factor and the right factor according to an almost arbitrary logical criterion $\chi$. For a bipartite pure state we may take the Von Neumann entropy of its partial trace on $\chi$ in order to quantify this entanglement. 
\item {\em Modelling delocalized observers, quantum reference frames}. Decoherence theory \cite{PazZurek} models the observer as a quantum system interacting with others; and the post-measurement state as that obtained by tracing out the observer. But since the observer is quantum, it could be delocalized, raising the question of what it means to take the trace out then. Here we can model a delocalized observer by delocalized black particles, and trace them out. This ability to ``take the vantage point of delocalized quantum system'' is in fact a feature in common with quantum reference frames. 
\item {\em Ad hoc notions of causality, emergence of space}. The notion of $\chi\zeta$-causality allows us to define causality constraints according to almost arbitrary families of logical criteria $(\chi,\zeta)$. This includes scenarios where all black particles communicate whatever their network distance, say. In fact the very notion of network connectivity is arbitrary in the theory, i.e. $\zeta^r$ can in principle be redefined in order to better fit ad hoc causality constraints, possibly emerging in a similar way to pointer states in decoherence theory \cite{PazZurek}. 
\end{description}
\bibliography{biblio}

\begin{thebibliography}{10}

\bibitem{ArrighiQCGD}
P.~Arrighi and S.~Martiel.
\newblock Quantum causal graph dynamics.
\newblock {\em Physical Review D.}, 96(2):024026, 2017.
\newblock Pre-print arXiv:1607.06700.

\bibitem{ArrighiUCAUSAL}
P.~Arrighi, V.~Nesme, and R.~Werner.
\newblock {Unitarity plus causality implies localizability}.
\newblock {\em J. of Computer and Systems Sciences}, 77:372--378, 2010.
\newblock Local proceedings of QIP 2010 (long talk).

\bibitem{ArrighiAQG}
Pablo Arrighi, Christopher Cedzich, Marin Costes, Ulysse R{\'e}mond, and
  Beno{\^\i}t Valiron.
\newblock Addressable quantum gates.
\newblock {\em arXiv preprint arXiv:2109.08050}, 2021.

\bibitem{ArrighiQNT}
Pablo Arrighi, Am{\'e}lia Durbec, and Matt Wilson.
\newblock Quantum networks theory.
\newblock {\em arXiv preprint arXiv:2110.10587}, 2021.

\bibitem{Spekkens_refs}
Stephen~D. Bartlett, Terry Rudolph, and Robert~W. Spekkens.
\newblock Reference frames, superselection rules, and quantum information.
\newblock {\em Rev. Mod. Phys.}, 79:555--609, Apr 2007.
\newblock URL: \url{https://link.aps.org/doi/10.1103/RevModPhys.79.555}, \href
  {https://doi.org/10.1103/RevModPhys.79.555}
  {\path{doi:10.1103/RevModPhys.79.555}}.

\bibitem{Beckman}
D.~Beckman, D.~Gottesman, M.~A. Nielsen, and J.~Preskill.
\newblock {Causal and localizable quantum operations}.
\newblock {\em Phys. Rev. A.}, 64(052309), 2001.

\bibitem{BratteliRobinson}
O.~Bratteli and D.~Robinson.
\newblock {\em {Operators algebras and quantum statistical mechanics 1}}.
\newblock Springer, 1987.

\bibitem{BruknerDynamics}
Esteban Castro-Ruiz, Flaminia Giacomini, and \ifmmode
  \check{C}\else~\v{C}\fi{}aslav Brukner.
\newblock Dynamics of quantum causal structures.
\newblock {\em Phys. Rev. X}, 8:011047, Mar 2018.
\newblock URL: \url{https://link.aps.org/doi/10.1103/PhysRevX.8.011047}, \href
  {https://doi.org/10.1103/PhysRevX.8.011047}
  {\path{doi:10.1103/PhysRevX.8.011047}}.

\bibitem{ManyWorlds}
Kostia Chardonnet, Marc de~Visme, Beno{\^\i}t Valiron, and Renaud Vilmart.
\newblock The many-worlds calculus.
\newblock {\em arXiv preprint arXiv:2206.10234}, 2022.

\bibitem{chiribella_subsystems}
Giulio Chiribella.
\newblock Agents, subsystems, and the conservation of information.
\newblock {\em Entropy}, 20(5):358, May 2018.
\newblock URL: \url{http://dx.doi.org/10.3390/e20050358}, \href
  {https://doi.org/10.3390/e20050358} {\path{doi:10.3390/e20050358}}.

\bibitem{QCircuitCompleteness}
Alexandre Cl{\'e}ment, Nicolas Heurtel, Shane Mansfield, Simon Perdrix, and
  Benoit Valiron.
\newblock A complete equational theory for quantum circuits.
\newblock {\em arXiv preprint arXiv:2206.10577}, 2022.

\bibitem{Sheets}
Cole Comfort, Antonin Delpeuch, and Jules Hedges.
\newblock Sheet diagrams for bimonoidal categories.
\newblock {\em arXiv preprint arXiv:2010.13361}, 2020.

\bibitem{hamette_qrt}
Anne-Catherine de~la Hamette and Thomas~D. Galley.
\newblock Quantum reference frames for general symmetry groups.
\newblock {\em Quantum}, 4:367, Nov 2020.
\newblock URL: \url{http://dx.doi.org/10.22331/q-2020-11-30-367}, \href
  {https://doi.org/10.22331/q-2020-11-30-367}
  {\path{doi:10.22331/q-2020-11-30-367}}.

\bibitem{FreidelFusion}
William Donnelly and Laurent Freidel.
\newblock Local subsystems in gauge theory and gravity.
\newblock {\em Journal of High Energy Physics}, 2016(9):1--45, 2016.

\bibitem{moliner_space}
Pau Enrique~Moliner, Chris Heunen, and Sean Tull.
\newblock Space in monoidal categories.
\newblock {\em Electronic Proceedings in Theoretical Computer Science},
  266:399–410, Feb 2018.
\newblock URL: \url{http://dx.doi.org/10.4204/EPTCS.266.25}, \href
  {https://doi.org/10.4204/eptcs.266.25} {\path{doi:10.4204/eptcs.266.25}}.

\bibitem{GogiosoChurch}
Stefano Gogioso.
\newblock A process-theoretic church of the larger hilbert space.
\newblock {\em arXiv preprint arXiv:1905.13117}, 2019.

\bibitem{gogosiofunc}
Stefano Gogioso, Maria~E. Stasinou, and Bob Coecke.
\newblock Functorial evolution of quantum fields.
\newblock {\em Frontiers in Physics}, 9:24, 2021.
\newblock URL:
  \url{https://www.frontiersin.org/article/10.3389/fphy.2021.534265}, \href
  {https://doi.org/10.3389/fphy.2021.534265}
  {\path{doi:10.3389/fphy.2021.534265}}.

\bibitem{Lorenz_2021}
Robin Lorenz and Jonathan Barrett.
\newblock Causal and compositional structure of unitary transformations.
\newblock {\em Quantum}, 5:511, Jul 2021.
\newblock URL: \url{http://dx.doi.org/10.22331/q-2021-07-28-511}, \href
  {https://doi.org/10.22331/q-2021-07-28-511}
  {\path{doi:10.22331/q-2021-07-28-511}}.

\bibitem{PazZurek}
J.~P. Paz and W.~H. Zurek.
\newblock {Environment-induced decoherence and the transition from quantum to
  classical}.
\newblock {\em Lecture Notes in Physics}, pages 77--140, 2002.

\bibitem{RovelliLQG}
Carlo Rovelli.
\newblock Simple model for quantum general relativity from loop quantum
  gravity.
\newblock In {\em Journal of Physics: Conference Series}, volume 314, page
  012006. IOP Publishing, 2011.

\bibitem{SchumacherWerner}
B.~Schumacher and R.~Werner.
\newblock {Reversible quantum cellular automata.}
\newblock arXiv pre-print quant-ph/0405174, 2004.

\bibitem{SchumacherWestmoreland}
B.~Schumacher and M.~D. Westmoreland.
\newblock {Locality and information transfer in quantum operations}.
\newblock {\em Quantum Information Processing}, 4(1):13--34, 2005.

\bibitem{VanrietveldeRouted}
Augustin Vanrietvelde, Hl{\'e}r Kristj{\'a}nsson, and Jonathan Barrett.
\newblock Routed quantum circuits.
\newblock {\em Quantum}, 5:503, 2021.

\bibitem{ConsistentICO}
Augustin Vanrietvelde, Nick Ormrod, Hl{\'e}r Kristj{\'a}nsson, and Jonathan
  Barrett.
\newblock Consistent circuits for indefinite causal order.
\newblock {\em arXiv preprint arXiv:2206.10042}, 2022.

\bibitem{zanardi_virtual}
Paolo Zanardi.
\newblock Virtual quantum subsystems.
\newblock {\em Physical Review Letters}, 87(7), Jul 2001.
\newblock URL: \url{http://dx.doi.org/10.1103/PhysRevLett.87.077901}, \href
  {https://doi.org/10.1103/physrevlett.87.077901}
  {\path{doi:10.1103/physrevlett.87.077901}}.

\bibitem{zanardi_observe}
Paolo Zanardi, Daniel~A. Lidar, and Seth Lloyd.
\newblock Quantum tensor product structures are observable induced.
\newblock {\em Phys. Rev. Lett.}, 92:060402, Feb 2004.
\newblock URL: \url{https://link.aps.org/doi/10.1103/PhysRevLett.92.060402},
  \href {https://doi.org/10.1103/PhysRevLett.92.060402}
  {\path{doi:10.1103/PhysRevLett.92.060402}}.

\end{thebibliography}

\appendix

\section{Proofs of propositions}\label{app:props}

\paragraph*{Proof of Prop. \ref{prop:traceouts}}

\begin{proof} 

[Complete positivity preservation]

\begin{align*}
\ket{G}\bra{H} & =\ket{G_{\zeta }}\bra{H_{\zeta }}  \tensorzeta   \ket{G_{\overline{\zeta }}}\bra{H_{\overline{\zeta }}}\\
(( .)_{|\chi }  \tensorzeta   I) \ket{G}\bra{H} & =\ket{G_{\zeta \chi }}\bra{H_{\zeta \chi }}\braket{H_{\zeta \overline{\chi }}|G_{\zeta \overline{\chi }}}  \tensorzeta   \ket{G_{\overline{\zeta }}}\bra{H_{\overline{\zeta }}}
\end{align*}

Let $\alpha '_{G_{\chi } KG'} :=\begin{cases}
\alpha _{GG'} & \text{if} \ \ket{G_{\chi }} \tensorchi \ket{K} \ =\ \ket{G}\\
0 & \text{otherwise}
\end{cases}$ .
\begin{align*}
\textrm{Consider some }\ket{\psi } & =\sum _{G,\ G'\ \in \ \mathcal{G}} \alpha _{G G'} \ \ket{G}  \tensorzeta   \ket{G'}\\
\bra{\psi } & =\sum _{H,\ H'\ \in \ \mathcal{G}} \alpha _{HH'}^{*} \ \bra{H}  \tensorzeta   \bra{H'}\\
\ket{\psi }\bra{\psi } & =\sum _{G,\ G',\ H,\ H'\ \in \ \mathcal{G}} \alpha _{G G'} \alpha _{HH'}^{*} \ \ket{G}\bra{H}  \tensorzeta   \ket{G'}\bra{H'}\\
(( .)_{|\chi }  \tensorzeta   I)\left(\ket{\psi }\bra{\psi }\right) & =\sum _{ \begin{array}{l}
G,\ G',\ H,\ H'\ \in \ \mathcal{G}\\
\ket{G}  \tensorzeta   \ket{G'} \ \neq \ 0\\
\ket{H}  \tensorzeta   \ket{H'} \ \neq \ 0
\end{array}} \alpha _{G G'} \alpha _{HH'}^{*} \ \ket{G_{\chi }}\bra{H_{\chi }}\braket{H_{\overline{\chi }}|G_{\overline{\chi }}}  \tensorzeta   \ket{G'}\bra{H'}\\
 & =\sum _{ \begin{array}{l}
G ,\ H ,\ G',\ H'\ \in \ \mathcal{G}\\
G_{\overline{\chi }} =H_{\overline{\chi }}\\
\ket{G}  \tensorzeta   \ket{G'} \ \neq \ 0\\
\ket{H}  \tensorzeta   \ket{H'} \ \neq \ 0
\end{array}} \alpha _{G G'} \alpha _{HH'}^{*} \ \ket{G_{\chi }}\bra{H_{\chi }}  \tensorzeta   \ket{G'}\bra{H'}\\
 & =\sum _{ \begin{array}{l}
G_{\chi } ,\ H_{\chi } ,\ K,\ G',\ H'\ \in \ \mathcal{G}\\
\left(\ket{G_{\chi }} \tensorchi \ket{K}\right)  \tensorzeta   \ket{G'} \ \neq \ 0\\
\left(\ket{H_{\chi }} \tensorchi \ket{K}\right)  \tensorzeta   \ket{H'} \ \neq \ 0
\end{array}} \alpha '_{G_{\chi } KG'} \alpha ^{\prime *}_{H_{\chi } KH'} \ \ket{G_{\chi }}\bra{H_{\chi }}  \tensorzeta   \ket{G'}\bra{H'}\\
\ket{\phi ^{K}} & :=\sum _{ \begin{array}{l}
G_{\chi } ,\ G'\ \in \ \mathcal{G}\\
\left(\ket{G_{\chi }} \tensorchi \ket{K}\right)  \tensorzeta   \ket{G'} \ \neq \ 0
\end{array}} \alpha '_{G_{\chi } KG'}\ket{G_{\chi }}  \tensorzeta   \ket{G'}\\
(( .)_{|\chi }  \tensorzeta   I)\left(\ket{\psi }\bra{\psi }\right) & =\sum _{K\ \in \ \mathcal{G}}\ket{\phi ^{K}}\bra{\phi ^{K}}\\
(( .)_{|\chi }  \tensorzeta   I)\left(\sum _{i}\ket{\psi ^{i}}\bra{\psi ^{i}}\right) & =\sum _{K\ \in \ \mathcal{G} ,\ i}\ket{\phi ^{i,K}}\bra{\phi ^{i,K}}
\end{align*}

[Trace Preservation]

Notice that $\overline{\zeta \chi \cup \overline{\zeta }} =\zeta \overline{\chi }$.
\begin{align*}
(( .)_{\chi }  \tensorzeta   I)\left(\ket{G}\bra{H}\right) & =\ \left(\ket{G_{\zeta }}\bra{H_{\zeta }}\right)_{|\chi }   \tensorzeta         \ \ket{G_{\overline{\zeta }}}\bra{H_{\overline{\zeta }}}\\
 & =\ \left(\ket{G_{\zeta \chi }}\bra{H_{\zeta \chi }}  \tensorzeta   \ket{G_{\overline{\zeta }}}\bra{H_{\overline{\zeta }}}\right)\braket{H_{\zeta \overline{\chi }}|G_{\zeta \overline{\chi }}}\\
 & =\ket{G_{\zeta \chi \cup \overline{\zeta }}}\bra{H_{\zeta \chi \cup \overline{\zeta }}}\braket{H_{\zeta \overline{\chi }}|G_{\zeta \overline{\chi }}}\\
\left((( .)_{\chi }  \tensorzeta   I)\left(\ket{G}\bra{H}\right)\right)_{|\emptyset } & =\braket{H_{\zeta \chi \cup \overline{\zeta }}| G_{\zeta \chi \cup \overline{\zeta }}}\braket{H_{\zeta \overline{\chi }}|G_{\zeta \overline{\chi }}}\\
\text{By Lem. [F].2} & =\braket{H|G}
\end{align*}
\end{proof}

\paragraph*{Proof of Prop. \ref{prop:gatelocality}}
\begin{proof}

[Preliminary]

\begin{align*}
A\tensorchi I & =\left(\sum _{G ,H \in \mathcal{G}}\bra{H} A\ket{G}\ket{H}\bra{G}\right) \tensorchi \left(\sum _{K \in \mathcal{G}}\ket{K}\bra{K}\right)\\
 & =\left(\sum _{G,H \in \mathcal{G}}\bra{H} A\ket{G}\ket{H}\bra{G}\right) \tensorchi \left(\sum _{G' ,H' \in \mathcal{G}}\braket{H'|G'}\ket{H'}\bra{G'}\right)\\
 & =\sum _{ \begin{array}{l}
G,H \in \mathcal{G}\\
G',H' \in \mathcal{G}
\end{array}}\bra{H} A\ket{G}\braket{H'|G'}\left(\ket{H}\bra{G} \tensorchi \ket{H'}\bra{G'}\right)\\
\text{By Lem. [F].6} & =\sum _{G,H\in \mathcal{G}}\bra{H_{\chi }} A\ket{G_{\chi }}\braket{H_{\overline{\chi }}|G_{\overline{\chi }}}\ket{H}\bra{G}
\end{align*}

[First part]

\begin{align*}
A\ \chi \text{-local} & \Leftrightarrow \bra{H} A\ket{G} =\bra{H_{\chi }} A\ket{G_{\chi }}\braket{H_{\overline{\chi }}|G_{\overline{\chi }}}\\
 & \Leftrightarrow A=\sum _{G,H\in \mathcal{G}}\bra{H_{\chi }} A\ket{G_{\chi }}\braket{H_{\overline{\chi }}|G_{\overline{\chi }}}\ket{H}\bra{G} \ =\ A\tensorchi I
\end{align*}

[Second part]

$\bra{H}( B\tensorchi I)\ket{G} =\bra{H_{\chi }} B\ket{G_{\chi }}\braket{H_{\overline{\chi }}|G_{\overline{\chi }}}$ by the preliminaries. So, $B\tensorchi I$ is $\chi $-local.

We show that $( B\tensorchi I) =(( B\tensorchi I) \tensorchi     \ I)$. By preliminaries:
\begin{align*}
\bra{H}(( B\tensorchi I) \tensorchi     \ I)\ket{G} & =\bra{H_{\chi }}( B\tensorchi I)\ket{G_{\chi }}\braket{H_{\overline{\chi }}|G_{\overline{\chi }}}\\
\text{By prelim. } & =\bra{H_{\chi \chi }} B\ket{G_{\chi \chi }} \braket{ H_{\chi \overline{\chi }} | G_{\chi \overline{\chi }}}\braket{H_{\overline{\chi }}|G_{\overline{\chi }}}\\
\text{By idempotency and Lem. [F].3} \  & =\bra{H_{\chi }} B\ket{G_{\chi }}\braket{\emptyset |\emptyset }\braket{H_{\overline{\chi }}|G_{\overline{\chi }}}\\
 & =\bra{H_{\chi }} B\ket{G_{\chi }}\braket{H_{\overline{\chi }}|G_{\overline{\chi }}}\\
\text{By prelim. } & =\bra{H}( B\tensorchi I)\ket{G}
\end{align*}
%[Name-preserving case]
%
%A matrix $B$ is n.-p. if and only if it is a sum of terms of the form $\ket{G}\bra{H}$ with $V( G) \corresponds V( H)$.\\
%Then, $B\tensorchi I$ is a sum of terms of the form $\ket{G'}\bra{H'} =\left(\ket{G} \tensorchi \ket{K}\right)\left(\bra{H} \tensorchi \bra{K}\right)$, with \ $V( G') =V( G) \cup V( K) \corresponds V( H) \cup V( K) =V( H')$.
\end{proof}

\paragraph*{Proof of Prop. \ref{prop:duallocality}}

\begin{proof}

[$\Rightarrow $]
\begin{align*}
\left( A\braket{G|H}\right)_{|\emptyset } & =\bra{H} A\ket{G}\\
 & =\bra{H_{\chi }} A\ket{G_{\chi }}\braket{H_{\overline{\chi }}|G_{\overline{\chi }}}\\
 & =\left( A\ket{G_{\chi }}\bra{H_{\chi }}\braket{H_{\overline{\chi }}|G_{\overline{\chi }}}\right)_{|\emptyset }\\
 & =\left( A\left(\braket{G|H}\right)_{|\chi }\right)_{|\emptyset }
\end{align*}
[$\Leftarrow $] 
\begin{align*}
\bra{H} A\ket{G} & =\left( A\ket{G}\bra{H}\right)_{|\emptyset }\\
 & =\left( A\left(\ket{G}\bra{H}\right)_{|\chi }\right)_{|\emptyset }\\
 & =\left( A\ket{G_{\chi }}\bra{H_{\chi }}\braket{H_{\overline{\chi }}|G_{\overline{\chi }}}\right)_{|\emptyset }\\
 & =\bra{H_{\chi }} A\ket{G_{\chi }}\braket{H_{\overline{\chi }}|G_{\overline{\chi }}}
\end{align*}
\end{proof}

\paragraph*{Proof of Prop. \ref{prop:tomography}}

\begin{proof}

In general, \ $\rho _{|}{}_{\chi } =\sum _{G_{\chi } ,\ H_{\chi } \ \in \ \mathcal{G}_{\chi }} \alpha _{G_{\chi } H_{\chi }}\ket{G_{\chi }}\bra{H_{\chi }} \ $ and $\sigma _{|}{}_{\chi } =\sum _{G_{\chi } ,\ H_{\chi } \ \in \ \mathcal{G}_{\chi }} \beta _{G_{\chi } H_{\chi }}\ket{G_{\chi }}\bra{H_{\chi }} \ $.

Let $E_{\chi }^{H_{\chi } G_{\chi }} :=\ket{H_{\chi }}\bra{G_{\chi }}$ and $E^{H_{\chi } G_{\chi }} :=E_{\chi }^{H_{\chi } G_{\chi }} \tensorchi     \ I$, which is local by Prop. \ref{prop:gatelocality}.

We have $\left( E^{H_{\chi } G_{\chi }} \rho \right)_{|\emptyset } =\left( E^{H_{\chi } G_{\chi }} \rho _{|}{}_{\chi }\right)_{|\emptyset } =\left( E_{\chi }^{H_{\chi } G_{\chi }} \rho _{|}{}_{\chi }\right)_{|\emptyset } =\alpha _{G_{\chi } H_{\chi }}$, as the following shows:
\begin{align*}
\left( E^{H_{\chi } G_{\chi }} \rho _{|}{}_{\chi }\right)_{|\emptyset } & =\sum _{G'_{\chi } ,\ H'_{\chi } \ \in \ \mathcal{G}_{\chi } \ K\ \in \ \mathcal{G}} \alpha _{G'_{\chi } H'_{\chi }}\left(\left(\ket{H_{\chi }} \tensorchi \ket{K}\right)\left(\bra{G_{\chi }} \tensorchi \bra{K}\right)\ket{G'_{\chi }}\bra{H'_{\chi }}\right)_{|\emptyset }\\
 & =\sum _{G'_{\chi } ,\ H'_{\chi } \ \in \ \mathcal{G}_{\chi } \ K\ \in \ \mathcal{G}} \alpha _{G'_{\chi } H'_{\chi }}\left(\bra{G_{\chi }} \tensorchi \bra{K}\right)\ket{G'_{\chi }}\bra{H'_{\chi }}\left(\ket{H_{\chi }} \tensorchi \ket{K}\right)\\
\left(\bra{G_{\chi }} \tensorchi \bra{G_{\overline{\chi }}}\right)\ket{G'_{\chi }} & =\left(\bra{G_{\chi }} \tensorchi \bra{G_{\overline{\chi }}}\right)\left(\ket{G'_{\chi \chi }} \tensorchi \ket{G'_{\chi \overline{\chi }}}\right)\\
\text{By idempotency.} & =\left(\bra{G_{\chi }} \tensorchi \bra{G_{\overline{\chi }}}\right)\left(\ket{G'_{\chi }} \tensorchi \ket{\emptyset }\right)\\
 & =\braket{G_{\chi }|G'_{\chi }}\braket{G_{\overline{\chi }}|\emptyset }\\
\left( E^{H_{\chi } G_{\chi }} \rho _{|}{}_{\chi }\right)_{|\emptyset } & =\sum _{ \begin{array}{l}
G'_{\chi } ,\ H'_{\chi } \ \in \ \mathcal{G}_{\chi } \ K\ \in \ \mathcal{G}\\
\ket{G_{\chi }} \tensorchi K\ \neq \ 0\\
\ket{H_{\chi }} \tensorchi K\ \neq \ 0
\end{array}} \alpha _{G'_{\chi } H'_{\chi }}\braket{G_{\chi }|G'_{\chi }}\braket{K|\emptyset }\braket{H'_{\chi }|H_{\chi }}\braket{\emptyset |K}\\
 & =\alpha _{G_{\chi } H_{\chi }}
\end{align*}
so $\chi $-local "measurements" can tell any difference between $\rho _{|}{}_{\chi }$ and $\sigma _{|}{}_{\chi }$.

%
%[Name-preserving case]
%
%In this case, \ $\rho _{|}{}_{\chi } =\sum _{ \begin{array}{l}
%G_{\chi } ,\ H_{\chi } \ \in \ \mathcal{G}_{\chi }\\
%V( G_{\chi }) \corresponds V( H_{\chi })
%\end{array}} \alpha _{G_{\chi } H_{\chi }}\ket{G_{\chi }}\bra{H_{\chi }} \ $ and $\sigma _{|}{}_{\chi } =\sum _{ \begin{array}{l}
%G_{\chi } ,\ H_{\chi } \ \in \ \mathcal{G}_{\chi }\\
%V( G_{\chi }) \corresponds V( H_{\chi })
%\end{array}} \beta _{G_{\chi } H_{\chi }}\ket{G_{\chi }}\bra{H_{\chi }} \ $.
%
%The fact that $V( G_{\chi }) \corresponds V( H_{\chi })$ comes from the assumption that $\rho $, $\sigma $ are n.-p. and the fact that the partial trace preserves that by Prop. \ref{prop:traceouts}. \\
%Then $E_{\chi }^{H_{\chi } G_{\chi }} :=\ket{H_{\chi }}\bra{G_{\chi }}$ and $E^{H_{\chi } G_{\chi }} :=E_{\chi }^{H_{\chi } G_{\chi }} \tensorchi     \ I$ \ are n.-p. by Prop. \ref{prop:traceouts}.
\end{proof}

\paragraph*{Proof of Prop. \ref{prop:dualcausality}}

\begin{proof}

[$\Rightarrow $] \ 

For all $\rho $, $\left( U\rho U^{\dagger }\right)_{|\zeta } =\left( U\rho _{|\chi } U^{\dagger }\right)_{|\zeta }$.\\ 
By Prop. \ref{prop:duallocality}, $A$ $\zeta $-local entail $\left( AU\rho U^{\dagger }\right)_{|\emptyset } =\left( AU\rho _{|\chi } U^{\dagger }\right)_{|\emptyset }$.\\
Thus, for all $\rho $, $\left( U^{\dagger } AU\rho \right)_{|\emptyset } =\left( U^{\dagger } AU\rho _{|\chi }\right)_{|\emptyset }$. 

So, by Prop. \ref{prop:duallocality}, $B=U^{\dagger } AU$ is $\chi $-local.

[ Strict$\Rightarrow $]

If $A$ is strictly $\zeta $-local then $A^{\dagger } A$ and $AA^{\dagger }$ are $\zeta $-local. \\
From the above it follows that $U^{\dagger } A^{\dagger } AU$ and $UAA^{\dagger } U$ are $\chi $-local.\\ 
But $U^{\dagger } A^{\dagger } AU=U^{\dagger } A^{\dagger } UU^{\dagger } AU=B^{\dagger } B$ and $UAA^{\dagger } U=UAUU^{\dagger } A^{\dagger } U=BB^{\dagger }$. \\
So, $B=U^{\dagger } AU$ is strictly $\chi $-local.

[$\Leftarrow $] 

For all $A$ $\zeta $-local, $U^{\dagger } AU$ is $\chi $-local. By Prop. \ref{prop:duallocality}, for all $\rho $, $\left( U^{\dagger } AU\rho \right)_{|\emptyset } =\left( U^{\dagger } AU\rho _{|\chi }\right)_{|\emptyset }$, from which it follows that for all $A$ (n.-p.), for all $\rho $, $\left( AU\rho U^{\dagger }\right)_{|\emptyset } =\left( AU\rho _{|\chi } U^{\dagger }\right)_{|\emptyset }$.\\
Finally by Prop. \ref{prop:tomography}, for all $\rho $ , we have$\left( U\rho U^{\dagger }\right)_{|\zeta } =\left( U\rho _{|\chi } U^{\dagger }\right)_{|\zeta }$. 

Thus $U$ is $\chi \zeta $-causal. 
\end{proof}

\section{Proof of the theorem}\label{app:th}

We will need two further propositions to get there, whose role is to extend a unitary to a larger space, whilst conserving its properties.

\begin{proposition}[Unitary extension]\label{prop:unitaryextension}
Consider $ \chi $ pointwise.\\
If $ U$ is a name-preserving operator over $ \mathcal{H}_{\chi }$, then $ U$ $ \chi $-consistent-preserving.\\
If $ U$ is a name-preserving unitary over $ \mathcal{H}_{\chi }$, then $ U':=U\tensorchi I$ is a name-preserving unitary with $ U^{\prime \dagger } =U^{\dagger } \tensorchi I$.
\end{proposition}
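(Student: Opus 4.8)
The plan is to prove the two assertions in turn, the first one carrying most of the weight. Throughout I read ``operator over $\mathcal{H}_{\chi}$'' as meaning $\bra{H}U\ket{G}=0$ unless $G_{\chi}=G$ and $H_{\chi}=H$, i.e. $U$ only connects graphs already fixed by $\chi$. Since the definition of $\chi$-consistent-preservation is self-dual in $U$ and $U^{\dagger}$, and since $U^{\dagger}$ is again a name-preserving operator over $\mathcal{H}_{\chi}$ whenever $U$ is, it suffices to establish the single implication $\bra{H}U\ket{G_{\chi}}\neq 0\Rightarrow\ket{H}\tensorchi\ket{G_{\overline{\chi}}}\neq 0$.

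So fix $G,H$ with $\bra{H}U\ket{G_{\chi}}\neq 0$. Being over $\mathcal{H}_{\chi}$ forces $H_{\chi}=H$, and name-preservation forces $V(H)=V(G_{\chi})$. Using $V(G)=V(G_{\chi})\sqcup V(G_{\overline{\chi}})$, the supports of $H$ and of $G_{\overline{\chi}}$ are disjoint, so $F:=H\cup G_{\overline{\chi}}$ is a well-named graph. I then compute $F_{\chi}$ from pointwise-ness: $F_{\chi}=H_{\chi}\cup (G_{\overline{\chi}})_{\chi}$. Here $H_{\chi}=H$; and for any $\sigma.v\in G_{\overline{\chi}}\subseteq G$ we have $\sigma.v\notin G_{\chi}=\bigcup_{\tau.u\in G}\{\tau.u\}_{\chi}$, hence $\sigma.v\notin\{\sigma.v\}_{\chi}\subseteq\{\sigma.v\}$, which forces $\{\sigma.v\}_{\chi}=\emptyset$; thus $(G_{\overline{\chi}})_{\chi}=\emptyset$ and $F_{\chi}=H$. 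Then $F_{\overline{\chi}}=F\setminus F_{\chi}=G_{\overline{\chi}}$ by disjointness, so by Def.~\ref{def:tensors} we get $\ket{H}\tensorchi\ket{G_{\overline{\chi}}}=\ket{F}\neq 0$. Running the same argument for $U^{\dagger}$ finishes the first assertion.

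For the second assertion, name-preservation of $U':=U\tensorchi I$ is read off from Lem. [F].6: $\bra{G}U'\ket{H}=\bra{G_{\chi}}U\ket{H_{\chi}}\braket{G_{\overline{\chi}}|H_{\overline{\chi}}}$, and if this is nonzero then $V(G_{\chi})=V(H_{\chi})$ (name-preservation of $U$) and $V(G_{\overline{\chi}})=V(H_{\overline{\chi}})$, whence $V(G)=V(H)$. For unitarity and the adjoint formula, observe that $I$ is trivially $\chi$-consistent-preserving and, by the first assertion, so are $U$ and $U^{\dagger}$; Lem. [F].11 then gives $(U^{\dagger}\tensorchi I)(U\tensorchi I)=U^{\dagger}U\tensorchi I$ and $(U\tensorchi I)(U^{\dagger}\tensorchi I)=UU^{\dagger}\tensorchi I$. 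Since $U$ is unitary on $\mathcal{H}_{\chi}$, both $U^{\dagger}U$ and $UU^{\dagger}$ restrict to the identity on the graphs fixed by $\chi$, and those are exactly the matrix entries $(\cdot)_{G_{\chi}H_{\chi}}$ that $\tensorchi I$ inspects; hence both products equal $I\tensorchi I$, which is $I$ by Lem. [F].2. Therefore $U'$ is unitary with $U'^{\dagger}=U^{\dagger}\tensorchi I$.

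The one genuinely delicate point is the consistency bookkeeping in the middle paragraph: one must check both that $F=H\cup G_{\overline{\chi}}$ is really a graph --- this is precisely where name-preservation is used --- and that $F_{\chi}$ is exactly $H$ rather than something strictly larger --- this is precisely where pointwise-ness is used, via the observation that any system lying outside $G_{\chi}$ is individually annihilated by $\chi$. Everything else is a routine manipulation with the toolbox of Table~\ref{tab:toolbox}.
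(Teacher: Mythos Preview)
Your proof is correct and follows essentially the same approach as the paper: the consistency-preservation step constructs the witness graph $F=H\cup G_{\overline{\chi}}$ using name-preservation to guarantee well-namedness and pointwise-ness to compute $F_{\chi}=H$, and the unitarity step invokes Lem.~[F].11 on the $\chi$-consistent-preserving factors to collapse $(U^{\dagger}\tensorchi I)(U\tensorchi I)$ to $I\tensorchi I=I$. Your treatment is in fact slightly more explicit than the paper's at two points --- the verification that $(G_{\overline{\chi}})_{\chi}=\emptyset$ via individual systems, and the remark that $\tensorchi I$ only inspects the $G_{\chi}H_{\chi}$ matrix entries so that unitarity on $\mathcal{H}_{\chi}$ suffices --- which is welcome.
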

\begin{proof}

[Consistency-preservation]

%$U$ is $\chi $-consistent-preserving if and only if $\bra{G'_{\chi }} U\ket{G_{\chi }} \neq 0\ $entails $\ket{G'_{\chi }} \tensorchi \ket{H} \neq 0\ \Leftrightarrow \ket{G_{\chi }} \tensorchi \ket{H} \neq 0$. \\

%Splitting the equivalence into two implications, $\bra{G'_{\chi }} U\ket{G_{\chi }} \neq 0\ $ needs entail
 $\ket{G'_{\chi }} \tensorchi \ket{G_{\overline{\chi }}} \neq 0$ and $\ket{G_{\chi }} \tensorchi \ket{G'_{\overline{\chi }}} \neq 0$.

We need to prove first that $\bra{G'_{\chi }} U\ket{G_{\chi }} \neq 0$ entails $\ket{G'_{\chi }} \tensorchi \ket{G_{\overline{\chi }}} \neq 0$ and second that $\bra{G'_{\chi }} U^{\dagger }\ket{G_{\chi }} \neq 0$ entails $\ket{G'_{\chi }} \tensorchi \ket{G_{\overline{\chi }}} \neq 0$.

Let us prove the first.

Since $U$ is over $\mathcal{H}_{\chi }$, 
$U\ket{G_{\chi }}  =\sum _{H_{\chi } \ \in \ \mathcal{G}_{\chi }} U_{H_{\chi } G_{\chi }}\ket{H_{\chi }}$.\\
For any $G$, consider $H_{\chi }$ such that $\bra{H_{\chi }} U\ket{G_{\chi }} \neq 0$ and construct the configuration $G'=H_{\chi } \cup G_{\overline{\chi }}$.\\ 
Notice that this union is always defined since $U$ is assumed name-preserving, and hence $\mathcal{N}[ V( G_{\chi })] \cap \mathcal{N}[ V( G_{\overline{\chi }})] =\emptyset $ entails $\mathcal{N}[ V( H_{\chi })] \cap \mathcal{N}[ V( G_{\overline{\chi }})] =\emptyset $.\\
 $\neg ( V( G_{\chi }) \land V( G_{\overline{\chi }}))$ entails $\neg ( V( H_{\chi }) \land V( G_{\overline{\chi }}))$. \\
Moreover since $\chi $ is pointwise it verifies that $G'_{\chi } =H_{\chi }$ and $G'_{\overline{\chi }} =G_{\overline{\chi }}$.\\
Thus, $\ket{H_{\chi }} \tensorchi \ket{G_{\overline{\chi }}} =\ket{G'} \neq 0$.\\
It follows that $\left( U\ket{G_{\chi }}\right)$, $\ket{G_{\overline{\chi }}}$ are $\chi $-consistent.

Similarly for $U^{\dagger }$. Hence $U$ is $\chi $-consistent-preserving.

[Unitarity]

By Lem. [F].11,
\begin{align*}
( U\tensorchi I)\left( U^{\dagger } \tensorchi I\right) & =\left( UU^{\dagger } \tensorchi I\right) =( I\tensorchi I) =I\\
\left( U^{\dagger } \tensorchi I\right)( U\tensorchi I) & =\left( U^{\dagger } U\tensorchi I\right) =( I\tensorchi I) =I
\end{align*}
\begin{align*}
\left( U\ket{G_{\chi }}\right) \tensorchi \ket{G_{\overline{\mu }}} & =\sum _{H_{\chi } \ \in \ \mathcal{G}_{\chi }} U_{H_{\chi } G_{\chi }}\left(\ket{H_{\chi }} \tensorchi \ket{G_{\overline{\chi }}}\right)\\
||U'\ket{G} ||^{2} & =\sum _{H_{\chi } ,\ H'_{\chi } \in \ \mathcal{G}_{\chi }} U_{H'_{\chi } G_{\chi }}^{*} U_{H_{\chi } G_{\chi }}\left(\bra{H'_{\chi }} \tensorchi \bra{G_{\overline{\chi }}}\right)\left(\ket{H_{\chi }} \tensorchi \ket{G_{\overline{\chi }}}\right)\\
 & =\sum _{ \begin{array}{l}
H_{\chi } ,\ H'_{\chi } \in \ \mathcal{G}_{\chi }\\
\ket{H_{\chi }} \tensorchi \ket{G_{\overline{\chi }}} \ \neq \ 0\\
\ket{H'_{\chi }} \tensorchi \ket{G_{\overline{\chi }}} \ \neq \ 0
\end{array}} U_{H'_{\chi } G_{\chi }}^{*} U_{H_{\chi } G_{\chi }}\braket{H'_{\chi }|H_{\chi }}\braket{G_{\overline{\chi }}|G_{\overline{\chi }}}\\
\text{By} \ U,\ I\ \chi \text{-consistent-preserving.} & =\sum _{H_{\chi } ,\ H'_{\chi } \in \ \mathcal{G}_{\chi }} U_{H'_{\chi } G_{\chi }}^{*} U_{H_{\chi } G_{\chi }}\braket{H'_{\chi }|H_{\chi }}\braket{G_{\overline{\chi }}|G_{\overline{\chi }}}\\
\text{By unitarity of} \ U. & =\sum _{H_{\chi } \ \in \ \mathcal{G}_{\chi }} U_{H_{\chi } G_{\chi }}^{*} U_{H_{\chi } G_{\chi }} =1
\end{align*}
Thus $U'$ is an isometry, i.e. $U^{\prime \dagger } U'=I$. 

Notice that $\left( U^{\dagger } \tensorchi I\right)$ is its right inverse since by means of Lem. [F].11. 

Since $U$ is over $\mathcal{H}_{\chi }$, we have that $U^{\ \dagger }$preserves the range of $\chi $, and so $U^{\dagger } ,\ I$ are $\chi $-consistent-preserving. 

Thus, 
\begin{align*}
\left( U^{\dagger } \tensorchi I\right)\ket{G} & =\sum _{H_{\chi } \ \in \ \mathcal{G}_{\chi }} U_{G_{\chi } H_{\chi }}^{*}\left(\ket{H_{\chi }} \tensorchi \ket{G_{\overline{\chi }}}\right)\\
U'\left( U^{\dagger } \tensorchi I\right)\ket{G} & =\sum _{ \begin{array}{l}
H_{\chi } \ \in \ \mathcal{G}_{\chi }\\
\ket{H_{\chi }} \tensorchi \ket{G_{\overline{\chi }}} \ \neq \ 0
\end{array}} U_{H'_{\chi } H_{\chi }} U_{G_{\chi } H_{\chi }}^{*}\left(\ket{H'_{\chi }} \tensorchi \ket{G_{\overline{\chi }}}\right)\\
\text{By} \ U^{\dagger } ,\ I\ \chi \text{-consistent-preserving.} & =\sum _{H_{\chi } \ \in \ \mathcal{G}_{\chi }} U_{H_{\chi } H'_{\chi }} U_{G_{\chi } H_{\chi }}^{*}\left(\ket{H'_{\chi }} \tensorchi \ket{G_{\overline{\chi }}}\right)\\
 & =\sum _{H_{\chi } \ \in \ \mathcal{G}_{\chi }} I_{H'_{\chi } G_{\chi }}\left(\ket{H'_{\chi }} \tensorchi \ket{G_{\overline{\chi }}}\right)\\
 & =\ket{G}
\end{align*}
[Name-preservation]

Follows from Prop. \ref{prop:gatelocality}.

\end{proof}

\begin{proposition}[Causal extension]\label{prop:causalextension}
First consider $ U$ a $ \chi '\zeta '$-causal operator and $ \chi '\sqsubseteq \chi $, $ \zeta \sqsubseteq \zeta '$. \\
Then $ U$ is an $ \chi \zeta $-causal operator.

Second consider three restrictions $ \mu ,\zeta, \chi $ such that $ [ \mu ,\zeta ] =[\overline{\mu } ,\zeta ] =[ \mu ,\overline{\zeta }] =[\overline{\mu } ,\overline{\zeta }] =0$ with $ \mu $ pointwise and both $\zeta$ and $\chi$ extensible. \\
Consider $ U$ an name-preserving $ \chi \zeta $-causal unitary operator over $ \mathcal{H}_{\mu }$, and $ U':=U \tensormu  I$ its unitary extension. \\
Then $ U'$ is $ \xi \zeta $-causal operator, w.r.t the restriction $ \xi :=\mu \chi \cup \overline{\mu } \zeta $.
\end{proposition}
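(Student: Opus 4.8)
The plan is to prove the two parts separately; only the second requires any work. \textbf{Part 1} is a short consequence of the definition of causality together with Lem. [F].8, by which $\eta\sqsubseteq\theta$ implies $(\tau_{|\theta})_{|\eta}=\tau_{|\eta}$. Apply $(\cdot)_{|\zeta}$ to the $\chi'\zeta'$-causality equation $(U\rho U^{\dagger})_{|\zeta'}=(U\rho_{|\chi'}U^{\dagger})_{|\zeta'}$ and use $\zeta\sqsubseteq\zeta'$ to get $(U\rho U^{\dagger})_{|\zeta}=(U\rho_{|\chi'}U^{\dagger})_{|\zeta}$. Running the same two steps with $\rho_{|\chi}$ in place of $\rho$ gives $(U\rho_{|\chi}U^{\dagger})_{|\zeta}=(U(\rho_{|\chi})_{|\chi'}U^{\dagger})_{|\zeta}$, and $\chi'\sqsubseteq\chi$ collapses $(\rho_{|\chi})_{|\chi'}$ to $\rho_{|\chi'}$. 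Chaining the two equalities yields $(U\rho U^{\dagger})_{|\zeta}=(U\rho_{|\chi}U^{\dagger})_{|\zeta}$, i.e. $\chi\zeta$-causality.

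\textbf{Part 2.} By bilinearity it suffices to prove $(U'\ket{G}\bra{H}U^{\prime\dagger})_{|\zeta}=(U'(\ket{G}\bra{H})_{|\xi}U^{\prime\dagger})_{|\zeta}$ for all graphs $G,H$. First I would split the input along $\mu$ via $\ket{G}\bra{H}=\ket{G_{\mu}}\bra{H_{\mu}}\tensormu\ket{G_{\overline{\mu}}}\bra{H_{\overline{\mu}}}$ (as in the proof of Prop.~\ref{prop:traceouts}). Since $U'=U\tensormu I$ with $U^{\prime\dagger}=U^{\dagger}\tensormu I$ by Prop.~\ref{prop:unitaryextension}, the first part of Lem. [F].11 and the identity $(\ket{a}\tensormu\ket{b})(\bra{c}\tensormu\bra{d})=\ket{a}\bra{c}\tensormu\ket{b}\bra{d}$ (Def.~\ref{def:tensors}, bilinearity) give
\[
U'\ket{G}\bra{H}U^{\prime\dagger} \;=\; \bigl(U\ket{G_{\mu}}\bra{H_{\mu}}U^{\dagger}\bigr)\tensormu\ket{G_{\overline{\mu}}}\bra{H_{\overline{\mu}}}.
\]
The commutation relations $[\mu,\zeta]=[\overline{\mu},\zeta]=[\mu,\overline{\zeta}]=[\overline{\mu},\overline{\zeta}]=0$ hold by hypothesis, and the two factors are $\mu$-consistent (here one uses that $U$ is name-preserving and $\mu$ pointwise, so any graph reached from $\ket{G_{\mu}}$ by $U$ has support $V(G_{\mu})$ and hence weaves with $\ket{G_{\overline{\mu}}}$), so Lem. [F].10 pushes the $\zeta$-trace through $\tensormu$. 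On the left factor $U$ is $\chi\zeta$-causal over $\mathcal{H}_{\mu}$ and $\ket{G_{\mu}}\bra{H_{\mu}}$ lives in $\mathcal{H}_{\mu}$, so $\bigl(U\ket{G_{\mu}}\bra{H_{\mu}}U^{\dagger}\bigr)_{|\zeta}=\bigl(U(\ket{G_{\mu}}\bra{H_{\mu}})_{|\chi}U^{\dagger}\bigr)_{|\zeta}$; expanding $(\ket{G_{\mu}}\bra{H_{\mu}})_{|\chi}$ and $(\ket{G_{\overline{\mu}}}\bra{H_{\overline{\mu}}})_{|\zeta}$ and pulling scalars out of $\tensormu$ leads to
\[
(U'\ket{G}\bra{H}U^{\prime\dagger})_{|\zeta} \;=\; \Bigl(\bigl(U\ket{G_{\mu\chi}}\bra{H_{\mu\chi}}U^{\dagger}\bigr)_{|\zeta}\tensormu\ket{G_{\overline{\mu}\zeta}}\bra{H_{\overline{\mu}\zeta}}\Bigr)\,\braket{H_{\mu\overline{\chi}}|G_{\mu\overline{\chi}}}\,\braket{H_{\overline{\mu}\overline{\zeta}}|G_{\overline{\mu}\overline{\zeta}}}.
\]

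\textbf{Part 2, conclusion.} The same three manipulations run on $(\ket{G}\bra{H})_{|\xi}=\ket{G_{\xi}}\bra{H_{\xi}}\braket{H_{\overline{\xi}}|G_{\overline{\xi}}}$: since $\mu$ is pointwise and $\xi=\mu\chi\cup\overline{\mu}\zeta$, one checks $(G_{\xi})_{\mu}=G_{\mu\chi}$, $(G_{\xi})_{\overline{\mu}}=G_{\overline{\mu}\zeta}$ and $G_{\overline{\xi}}=G_{\mu\overline{\chi}}\cup G_{\overline{\mu}\overline{\zeta}}$, and (using the analogous $\mu$-consistency check, and $(\ket{G_{\overline{\mu}\zeta}}\bra{H_{\overline{\mu}\zeta}})_{|\zeta}=\ket{G_{\overline{\mu}\zeta}}\bra{H_{\overline{\mu}\zeta}}$ by idempotency of $\zeta$) one gets
\[
(U'(\ket{G}\bra{H})_{|\xi}U^{\prime\dagger})_{|\zeta} \;=\; \Bigl(\bigl(U\ket{G_{\mu\chi}}\bra{H_{\mu\chi}}U^{\dagger}\bigr)_{|\zeta}\tensormu\ket{G_{\overline{\mu}\zeta}}\bra{H_{\overline{\mu}\zeta}}\Bigr)\,\braket{H_{\overline{\xi}}|G_{\overline{\xi}}}.
\]
Comparing with the previous display, equality reduces to the scalar identity $\braket{H_{\overline{\xi}}|G_{\overline{\xi}}}=\braket{H_{\mu\overline{\chi}}|G_{\mu\overline{\chi}}}\braket{H_{\overline{\mu}\overline{\zeta}}|G_{\overline{\mu}\overline{\zeta}}}$, which holds because $G_{\overline{\xi}}$ is the disjoint union of $G_{\mu\overline{\chi}}\subseteq G_{\mu}$ and $G_{\overline{\mu}\overline{\zeta}}\subseteq G_{\overline{\mu}}$ and, $\mu$ being pointwise, each system is intrinsically a ``$\mu$-system'' or an ``$\overline{\mu}$-system'', so $H_{\overline{\xi}}=G_{\overline{\xi}}$ forces $H_{\mu\overline{\chi}}=G_{\mu\overline{\chi}}$ and $H_{\overline{\mu}\overline{\zeta}}=G_{\overline{\mu}\overline{\zeta}}$ separately. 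This establishes $\xi\zeta$-causality of $U'$.

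\textbf{Main obstacle.} The genuine difficulty, as the rest of the paper warns, is the consistency bookkeeping: at each step one must certify the $\mu$-consistency and $\mu$-consistency-preservation hypotheses that license Lem. [F].10 and Lem. [F].11, and the scalar factorisation above, all of which rely essentially on $\mu$ being \emph{pointwise} so that belonging to $G_{\mu}$ versus $G_{\overline{\mu}}$ is decided system by system. A lesser preliminary is to pin down precisely what ``$\chi\zeta$-causal over $\mathcal{H}_{\mu}$'' means and to check that the $\zeta$-trace of an operator supported on $\mathcal{H}_{\mu}$ agrees with the ambient $\zeta$-trace.
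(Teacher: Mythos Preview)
Your proof is correct, but it takes a genuinely different route from the paper's.

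\textbf{Part 1.} Both arguments use Lem.~[F].8, but different halves of it. You stay in the Schr\"odinger picture: apply $(\cdot)_{|\zeta}$ to the $\chi'\zeta'$-causality equation, collapse traces via $\zeta\sqsubseteq\zeta'$ and $\chi'\sqsubseteq\chi$, and chain. The paper instead passes through Prop.~\ref{prop:dualcausality} (Heisenberg picture): $A$ $\zeta$-local $\Rightarrow$ $A$ $\zeta'$-local $\Rightarrow$ $U^\dagger AU$ $\chi'$-local $\Rightarrow$ $U^\dagger AU$ $\chi$-local. Your argument is more direct; the paper's leverages the locality half of Lem.~[F].8 rather than the trace half.

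\textbf{Part 2.} Here the approaches diverge substantially. You verify the Schr\"odinger-picture identity $(U'\rho U'^{\dagger})_{|\zeta}=(U'\rho_{|\xi}U'^{\dagger})_{|\zeta}$ on basis elements: split $\ket{G}\bra{H}$ along $\mu$, push $U'$ through via Lem.~[F].11, push the $\zeta$-trace through $\tensormu$ via Lem.~[F].10, invoke $\chi\zeta$-causality of $U$ on the $\mu$-factor, and compare scalar prefactors using the pointwise decomposition $G_{\overline\xi}=G_{\mu\overline\chi}\cup G_{\overline\mu\overline\zeta}$. The paper instead works entirely in the Heisenberg picture: it takes an arbitrary $\zeta$-local $A=L\tensorzeta I$, uses Lem.~[F].7 to swap the $\tensorzeta$/$\tensormu$ nesting, applies dual causality of $U$ to turn the $\mu$-factor into something $\tensorchi I_{\mu\overline\chi}$, rewrites $\tensormuchi$ and $\tensormubarchi$ as $\tensorxi$, swaps back via Lem.~[F].7, and concludes $U'^\dagger AU'$ is $\xi$-local by Prop.~\ref{prop:gatelocality}.

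What each buys: your route avoids the detour through dual causality and the tensor-swap Lem.~[F].7, at the cost of doing the $\mu$-consistency bookkeeping for Lem.~[F].10 twice (once for $G,H$ and once for $G_\xi,H_\xi$), which as you correctly flag is where the real work lies and is precisely what Prop.~\ref{prop:unitaryextension} supplies. The paper's route packages that consistency work inside the operational/dual lemmas already established, and its tensor-swap calculation makes the appearance of $\xi=\mu\chi\cup\overline\mu\zeta$ more structural. Both are complete; yours is closer to first principles, the paper's is more algebraic.
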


\begin{proof}

[First part]

Suppose $U$ is $\chi '\zeta '$-causal and $\chi '\sqsubseteq \chi $, $\zeta \sqsubseteq \zeta '$. 

Prop. \ref{prop:dualcausality} and using Lem. [F].8, $A$ $\zeta $-local implies $A$ $\zeta '$-local implies \ $U^{\dagger } AU$ $\chi '$-local implies $U^{\dagger } AU\ \chi $-local. Thus $A$ $\zeta $-local implies $U^{\dagger } AU$ $\chi $-local which by Prop. \ref{prop:dualcausality} is equivalent to $\chi \zeta $-causality.

[Second part]

From Lem. [F].5 we have $[ \mu ,\xi ] =[\overline{\mu } ,\xi ] =[ \mu ,\overline{\xi }] =[\overline{\mu } ,\overline{\xi }] =0$, and $\xi $ a restriction.

Any $A$ $\zeta $-local is of the form $A=L \tensorzeta   I$ with $L=\sum \alpha _{G_{\zeta } H_{\zeta }}\ket{G_{\zeta }}\bra{H_{\zeta }}$.
\begin{align*}
\ket{G_{\zeta }}\bra{H_{\zeta }}  \tensorzeta   I & =\ \left(\ket{G_{\zeta \mu }}\bra{H_{\zeta \mu }}  \tensormu  \ket{G_{\zeta \overline{\mu }}}\bra{H_{\zeta \overline{\mu }}}\right)  \tensorzeta   ( I_{\overline{\zeta } \mu }  \tensormu  I_{\overline{\zeta }\overline{\mu }})\\
\text{Commut. \& Lem. [F].7.} & =\ \left(\ket{G_{\mu \zeta }}\bra{H_{\mu \zeta }}  \tensorzeta   I_{\mu \overline{\zeta }}\right) \tensormu      \left(\ket{G_{\overline{\mu } \zeta }}\bra{H_{\overline{\mu } \zeta }}  \tensorzeta   I_{\overline{\mu }\overline{\zeta }}\right)\\
U'\left(\ket{G_{\zeta }}\bra{H_{\zeta }}  \tensorzeta   I\right) U^{\prime \dagger } & =\left( U\left(\ket{G_{\mu \zeta }}\bra{H_{\mu \zeta }}  \tensorzeta   I_{\mu \overline{\zeta }}\right) U^{\dagger }\right)  \tensormu  \left(\ket{G_{\overline{\mu } \zeta }}\bra{H_{\overline{\mu } \zeta }}  \tensorzeta   I_{\overline{\mu }\overline{\zeta }}\right)\\
\text{(By dual caus.) } & =\left( M^{G_{\mu \zeta } H_{\mu \zeta }} \tensorchi I_{\mu \chi }\right) \tensormu      \left(\ket{G_{\overline{\mu } \zeta }}\bra{H_{\overline{\mu } \zeta }}  \tensorzeta   I_{\overline{\mu }\overline{\zeta }}\right)\\
U\ \text{over} \ \mathcal{H}_{\mu } & =\left( M^{G_{\mu \zeta } H_{\mu \zeta }} \ \tensormuchi \ I_{\mu \chi }\right) \tensormu      \left(\ket{G_{\overline{\mu } \zeta }}\bra{H_{\overline{\mu } \zeta }} \ \tensormubarchi \ I_{\overline{\mu }\overline{\zeta }}\right)\\
 & =\left( M^{G_{\mu \zeta } H_{\mu \zeta }}  \tensorxi    I_{\mu \overline{\xi }}\right) \tensormu      \left(\ket{G_{\overline{\mu } \zeta }}\bra{H_{\overline{\mu } \zeta }}  \tensorxi    I_{\overline{\mu }\overline{\xi }}\right)\\
\text{Commut. \& Lem. [F].7.} & =\left( M^{G_{\mu \zeta } H_{\mu \zeta }}  \tensormu  \ket{G_{\overline{\mu } \zeta }}\bra{H_{\overline{\mu } \zeta }}\right)  \tensorxi    ( I_{\mu }  \tensormu  I_{\overline{\mu }})\\
\text{Lem. [F].6.} & =\left( M^{G_{\mu \zeta } H_{\mu \zeta }}  \tensormu  \ket{G_{\overline{\mu } \zeta }}\bra{H_{\overline{\mu } \zeta }}\right)  \tensorxi    I_{\overline{\xi }}\\
U'A U^{\prime \dagger } & =\sum \alpha _{G_{\zeta } H_{\zeta }}\left(\left( M^{G_{\mu \zeta } H_{\mu \zeta }}  \tensormu  \ket{G_{\overline{\mu } \zeta }}\bra{H_{\overline{\mu } \zeta }}\right)  \tensorxi    I_{\overline{\xi }}\right)\\
\text{By bilinearity } & =\left(\sum \alpha _{G_{\zeta } H_{\zeta }}\left( M^{G_{\mu \zeta } H_{\mu \zeta }}  \tensormu  \ket{G_{\overline{\mu } \zeta }}\bra{H_{\overline{\mu } \zeta }}\right)\right)   \tensorxi          I_{\overline{\xi }}
\end{align*}
So, $U'A U^{\prime \dagger }$ is $\xi $-local by the Prop. \ref{prop:gatelocality}. By Prop. \ref{prop:unitaryextension}, $U'$ is name-preserving. By Prop. \ref{prop:dualcausality}, \ $U'$ is $\xi \zeta $-causal. 
\end{proof}

\paragraph*{Proof of Th. \ref{th:blockdecomposition}}

\begin{proof}
Clearly $[ \mu ,\zeta ] =[\overline{\mu } ,\zeta ] =[ \mu ,\overline{\zeta }] =[\overline{\mu } ,\overline{\zeta }] =0$ as both are pointwise.

By Prop. \ref{prop:unitaryextension}, $U':=U \tensormu  I$ is unitary over $\mathcal{H} '$.\\

By Prop. \ref{prop:causalextension} and since $U$ is $\chi _{v} \zeta '_{v}$-causal, it is $\chi _{v} \zeta _{v}$-causal, and $U'$ is $\xi _{v} \zeta _{v}$-causal, with $\xi _{v} :=\mu \chi _{v} \cup \overline{\mu } \zeta _{v}$ a restriction.

Let the toggle $\tau $ be the bijection over systems such that $\tau ( b.\sigma .u) =\neg b.\sigma .u$.\\
Extend $\tau $ to $\mathcal{G} '$ by acting pointwise upon each system, and to $\mathcal{H} '$ by linearity.\\
Notice that it is unitary and name-preserving.\\
Notice that $\tau \left(\ket{G_{\mu }}  \tensormu  \ket{G_{\overline{\mu }}}\right) =\left( \tau \ket{G_{\overline{\mu }}}  \tensormu  \tau \ket{G_{\mu }}\right)$.\\
It is also unitary over $\mathcal{H}_{\zeta _{v}}$, thus $\tau _{v} :=\tau \   \tensorzeta         _{v} \ I$ is unitary over $\mathcal{H} '$ by Prop. \ref{prop:unitaryextension}.\\
By Prop. \ref{prop:gatelocality}, $\tau _{v}$ is $\zeta _{v}$-local. By unitarity, it is strictly $\zeta _{v}$-local.\\
Moreover, 
\begin{equation*}
\left(\prod _{v\ \in \ \mathcal{V}} \tau_{v}\right)=\tau
\end{equation*}
Notice also that $[ \tau _{u} ,\tau _{v}] =0$.

Let $K_{v} :=U^{\prime \dagger } \tau _{v} U'$. \\
It is name-preserving as a composition of name-preserving operators.\\
Since adjunction by a unitary is a morphism, $[ K_{u} ,K_{v}] =0$.\\
By Prop. \ref{prop:dualcausality}, it is $\xi _{v}$-local. By unitarity, it is strictly $\xi _{v}$-local.

Finally,
\begin{align*}
\left(\prod _{v\ \in \ \mathcal{V}}\tau _{v}\right)\left(\prod _{v\ \in \ \mathcal{V}}K_{v}\right)\ket{G} & =\tau \dotsc \left(U'^{\dagger }\tau _{v_{2}}U'\right)\left(U'^{\dagger }\tau _{v_{1}}U'\right)\ket{G}\\
\text{By unitarity of }U'. & =\tau \ U'^{\dagger }\ \left(\prod _{v\ \in \ \mathcal{V}}\tau _{v}\right)\ U'\ \ket{G}\\
 & =\tau \ U^{\prime \dagger } \ \tau \ U'\ \ket{G}\\
\text{By Prop. \ref{prop:unitaryextension}} & =\tau \ \left( U^{\dagger }  \tensormu  I\right) \ \tau \ ( U \tensormu  I) \ \left(\ket{G_{\mu }}  \tensormu  \ket{G_{\overline{\mu }}}\right) \ \\
 & =\tau \ \left( U^{\dagger }  \tensormu  I\right) \ \tau \ \left( U\ket{G_{\mu }}  \tensormu  \ket{G_{\overline{\mu }}}\right)\\
\text{Since} \ U\ \text{preserves the range of } \mu . & =\tau \ \left( U^{\dagger }  \tensormu  I\right)\left( \tau \ket{G_{\overline{\mu }}}  \tensormu  \tau U\ket{G_{\mu }}\right)\\
 & =\tau \ \left( U^{\dagger } \tau \ket{G_{\overline{\mu }}}  \tensormu  \tau U\ket{G_{\mu }}\right)\\
\text{Since} \ U\ \text{preserves the range of } \mu . & =\tau ^{2} U\ket{G_{\mu }}  \tensormu  \tau \ U^{\dagger } \tau \ket{G_{\overline{\mu }}}\\
\text{Since} \ \tau \ \text{involutive.} & =U\ket{G_{\mu }}  \tensormu  \tau \ U^{\dagger } \tau \ket{G_{\overline{\mu }}}\\
\left(\prod _{v\ \in \ \mathcal{V}}\tau_{v}\right)\left(\prod _{v\ \in \ \mathcal{V}}K_{v}\right)\ket{G_{\mu }} & =\left(\prod _{v\ \in \ \mathcal{V}}\tau _{v}\right)\left(\prod _{v\ \in \ \mathcal{V}}K_{v}\right)\left(\ket{G_{\mu }} \tensormu  \ket{\emptyset }\right)\\
 & =U\ket{G_{\mu }}  \tensormu  \tau \ U^{\dagger } \tau \ket{\emptyset }\\
\text{By n.-p.}, & =U\ket{G_{\mu }}  \tensormu  \ket{\emptyset }\\
 & =U\ket{G_{\mu }}
\end{align*}
\end{proof}

\section{Table of Results}
\begin{table}
\caption{\label{tab:toolbox} {\em Mathematical toolbox}.}  
\renewcommand{\arraystretch}{2}        
\begin{tabular}{p{0.22\textwidth}p{0.78\textwidth}}
\hline 
  Lem. [F].2 & $ \braket{H|G} =\braket{H_{\chi }|G_{\chi }}\braket{H_{\overline{\chi }}|G_{\overline{\chi }}}$  \\
\hline 
  Lem. [F].3 & $ \chi \chi =\chi $ \qquad $ \chi \overline{\chi } =\emptyset $  \\
\hline 
 ~ & $ \left( \rho \ket{G}\bra{H}\right)_{|\emptyset } =\bra{H} \rho \ket{G}$ \qquad $ ( \rho A)_{|\emptyset } =( A\rho )_{|\emptyset }$ \qquad $ ( \alpha \rho )_{|\chi } =\alpha ( \rho )_{|\chi }$  \\
\hline 
~ & If $ \ket{G} \tensorchi \ket{G'} \neq 0$\\  & then $ \ket{G} \tensorchi \ket{G'} \ =\ \ket{G\cup G'}$ \quad and \quad $ \left(\ket{G} \tensorchi \ket{G'}\right)_{\chi } =\ket{G}$  \\
\hline 
~ & $ \ \rho _{|\chi } \ =\ \sum _{G,H\in \mathcal{G} ,\ G_{\overline{\chi }} =H_{\overline{\chi }}} \rho _{G H} \ \ket{G_{\chi }}\bra{H_{\chi }} \ $  \\
\hline 
Lem. [F].8 & If $ \zeta \sqsubseteq \chi $, $ \begin{aligned}
( \rho _{|\chi })_{|\zeta } & =\rho _{|\zeta }
\end{aligned}$ and \quad $A$ $ \zeta $-local is $ \chi $-local. \\
\hline 
 Lem. [F].6 & $ \ A\tensorchi B\ =\ \sum _{G,H\in \mathcal{G}} A_{G_{\chi } H_{\chi }} B_{G_{\overline{\chi }} H_{\overline{\chi }}} \ \ket{G}\bra{H} \ $ \qquad $ A\tensorchi I =A\tensorchi I_{\overline{\chi }}$  \\
\hline 

 Lem. [F].7 & If $ [ \chi ,\zeta ] =[\overline{\chi } ,\zeta ] =[ \chi ,\overline{\zeta }] =[\overline{\chi } ,\overline{\zeta }] =0$,\\ & then $ ( A \tensorzeta   B) \tensorchi ( C \tensorzeta   D) =( A\tensorchi C)  \tensorzeta   ( B\tensorchi D) \ $  \\
\hline 
 Lem. [F].9 & If $ \rho ,\sigma $ $ \chi $-consistent, \ $ ( \rho \tensorchi \sigma )_{|\chi } =\rho \ \sigma _{|\emptyset } \ $\\ & If $ \zeta \sqsubseteq \chi $, $ \rho ,\sigma $ $ \chi $-consistent, \ $ ( \rho \tensorchi \sigma )_{|\zeta } =\rho _{|\zeta } \ \sigma _{|\emptyset } \ $  \\
\hline 
 Lem. [F].10 & If $ [ \chi ,\zeta ] =[\overline{\chi } ,\zeta ] =[ \chi ,\overline{\zeta }] =[\overline{\chi } ,\overline{\zeta }] =0$, \ and $ \rho ,\sigma $ $ \chi $-consistent,\\ & $ ( \rho \tensorchi \sigma )_{|\zeta } =\rho _{|\zeta } \tensorchi \sigma _{|\zeta } \ $  \\
\hline 
 Lem. [F].11 &  $ ( A\tensorchi I)\ket{G} =A\ket{G_{\chi }} \tensorchi \ket{G_{\overline{\chi }}}$.\\ & If $ A$, $ A'$, $ B$, $ B'$ are $ \chi $-consistent-preserving,$ ( A'\tensorchi B')( A\tensorchi B) =\ A'A\tensorchi B'B$.  \\
 \hline
\end{tabular}
\end{table}

\end{document}